\def\draft{0} 
\newcommand{\rebuttal}[1]{{\color{black} #1}}
\theoremstyle{plain}
\newtheorem{theorem}{Theorem}[section]
\newtheorem{lemma}[theorem]{Lemma}
\theoremstyle{definition}
\newtheorem{definition}[theorem]{Definition}
\theoremstyle{remark}
\newtheorem{remark}[theorem]{Remark}
\newenvironment{customthm}[1]
  {\innercustomthm}
  {\endinnercustomthm}
\newcommand{\E}{\mathbb{E}}
\newcommand{\R}{\mathbb{R}}
\newcommand{\eps}{\varepsilon}
\newcommand{\D}{\mathcal{D}}
\newcommand{\norm}[1]{\left\lVert#1\right\rVert}
\newcommand{\N}{\mathcal{N}}
\newcommand{\M}{\mathcal{M}}
\DeclareMathOperator*{\argmin}{\arg\!\min}
\newcommand{\Ind}{\mathds{1}}
\newcommand{\g}{\nabla}
\newcommand{\iden}{\bm{1}}
\newcommand{\inner}[2]{\left\langle #1, #2 \right\rangle}
\newcommand{\Renyi}{R\'enyi~}
\newcommand{\localSen}{\mathbb{LS}}
\newcommand{\globalSen}{\mathbb{GS}}
\newcommand{\lap}{\mathrm{Lap}}
\newcommand{\ptr}{\mathrm{PTR}}
\newcommand{\rdpfunc}{f_\alpha}
\newcommand{\poisson}{\mathrm{PoissonSample}}
\newcommand{\cX}{\mathcal{X}}
\newcommand{\cY}{\mathcal{Y}}
\newcommand{\MS}{\mathrm{MultiSets}}
\newcommand{\tsgd}{\mathrm{TSGD}}
\newcommand{\gau}{\mathrm{Gaussian}}
\newcommand{\full}[1]{\textcolor{green}{[Full Version: #1]}}
\newcommand{\full}[1]{\null}
\newcommand{\add}[1]{\textcolor{blue}{#1}}
\newcommand{\add}[1]{#1}
\newcommand{\epslap}{\eps_{\lap}}
\newcommand{\epsgau}{\eps_{\N}}
\newcommand{\epsRlap}{\eps_{\mathrm{R-}\lap}}
\newcommand{\epsRgau}{\eps_{\mathrm{R-}\N}}
\title{\Renyi Differential Privacy of Propose-Test-Release and Applications to Private and Robust Machine Learning}
\author{%
  Jiachen T. Wang \\
  Princeton University\\
  \texttt{tianhaowang@princeton.edu} \\
  \And
  Saeed Mahloujifar \\
  Princeton University \\
  \texttt{sfar@princeton.edu} \\
  \AND
  Shouda Wang \\
  Princeton University\\
  \texttt{sw1041@princeton.edu} \\
  \And
  Ruoxi Jia \\
  Virginia Tech \\
  \texttt{ruoxijia@vt.edu} \\
  \And
  Prateek Mittal \\
  Princeton University\\
  \texttt{pmittal@princeton.edu} \\
}
\begin{document}

\maketitle

\begin{abstract}
Propose-Test-Release (PTR) is a differential privacy framework that works with local sensitivity of functions, instead of their global sensitivity. This framework is typically used for releasing robust statistics such as median or trimmed mean in a differentially private manner. 
\add{
While PTR is a common framework introduced over a decade ago, using it in applications such as robust SGD where we need many adaptive robust queries is challenging. This is mainly due to the lack of \Renyi Differential Privacy (RDP) analysis, an essential ingredient underlying the moments accountant approach for differentially private deep learning.}
In this work, we generalize the standard PTR and derive the first RDP bound for it when the target function has bounded global sensitivity. We show that our RDP bound for PTR yields tighter DP guarantees than the directly analyzed $(\eps, \delta)$-DP. 
We also derive the algorithm-specific privacy amplification bound of PTR under subsampling. We show that our bound is much tighter than the general upper bound and close to the lower bound. Our RDP bounds enable tighter privacy loss calculation for the composition of many adaptive runs of PTR. As an application of our analysis, we show that PTR and our theoretical results can be used to design differentially private variants for byzantine robust training algorithms that use robust statistics for gradients aggregation. We conduct experiments on the settings of label, feature, and gradient corruption across different datasets and architectures. We show that PTR-based private and robust training algorithm significantly improves the utility compared with the baseline. 

\end{abstract}

\section{Introduction}
\label{sec:intro}

Privacy is a major concern for deploying machine learning (ML). In response, differential privacy (DP) \cite{dwork2006calibrating} has become the de-facto measure of privacy. For a differentially private mechanism, the probability distribution of the mechanism's outputs on a dataset should be close to the distribution of its outputs on the same dataset with any single individual’s data replaced. A general recipe for releasing the value of a function $f$ on dataset $S$ in a differentially private way is adding random noise to $f(S)$ (output perturbation), where noise magnitude should scale with $f$'s global sensitivity. 


However, it might be over-conservative to add noise scaled with global sensitivity. There is a line of research on whether we can do better (e.g., \cite{nissim2007smooth, dwork2009differential, thakurta2013differentially, kasiviswanathan2013analyzing}). 
Propose-Test-Release (PTR) \cite{dwork2009differential} is a framework that improves the general recipe with the notion of \emph{local sensitivity}. 
The main idea of PTR is as follows: instead of adding noise with respect to global sensitivity, we propose an amount of noise that is tolerable for many common queries. When we receive the actual query, we test (in a differentially private way) whether answering the query with the proposed amount of noise is enough for privacy. If the proposed noise is too small for limiting the privacy loss from the actual query, we may refuse to answer the query or respond with a larger noise. 
PTR works especially well for releasing robust statistics such as median or trimmed mean, as the robust statistics usually have small local sensitivity on most common inputs.


While PTR is a basic framework that ages back to the early days after the introduction of DP, it has not been used for differentially private optimization before. DP-SGD \cite{DBLP:conf/ccs/AbadiCGMMT016} is the general backbone for differentially private deep learning and optimization. 
One major challenge of augmenting SGD with PTR and training DP models is the calculation of privacy parameters after a large number of adaptive compositions. 
\Renyi differential privacy (RDP) and Moment Accountant \cite{mironov2017renyi, DBLP:conf/ccs/AbadiCGMMT016} enable us to calculate tighter privacy parameters for training DP models. Without the RDP bound of PTR, we need to calculate the privacy parameters using the advanced composition theorems \cite{dwork2014algorithmic} that can lead to significantly looser privacy bounds. 
Besides, we also need the bound of \emph{privacy amplification by subsampling} for PTR, which is the other important support for training DP models. It allows us to exploit the stochasticity of SGD for the interest of stronger privacy guarantees.

\textbf{Technical Overview. }
In this work, we derive the \Renyi DP bound for PTR, as well as for its Poisson subsampled variant when the target function has bounded global sensitivity. Our bounds make it possible for us to use PTR framework in augmenting private SGD. PTR could be characterized by three mechanisms. The first mechanism is $\M_1$ that determines which mechanism to run next. Depending on the outcome of $\M_1$ we then either run $\M_2$ or $\M_2'$. 
It is often the case that the worst-case privacy loss of one of $\M_2$ or $\M_2'$ is much larger than the other. Our bound exploits the fact that the worst-case will happen with small probability.
Specifically, instead of considering the worst-case privacy loss between $\M_2$ and $\M_2'$, and naively composing it with $\M_1$, we show that for RDP we can tighten the bound by the \emph{average} privacy loss of $\M_2$ and $\M'_2$ under the distribution imposed by $\M_1$. 
Direct $(\eps, \delta)$-DP analysis does not enjoy this benefit; compared with directly analyzing the $(\eps, \delta)$-DP bound of PTR, we show that by first bounding the RDP of PTR and then convert it to $(\eps, \delta)$-DP can lead to better privacy guarantee. 
Our proof could serve as a general recipe for analyzing DP/RDP guarantees for composed mechanisms where the privacy loss of each mechanism is adaptively determined. 
Additionally, we extend our analysis to the RDP of \emph{subsampled} PTR. Our algorithm-specific analysis (the ``white-box bound'') allows us to get tighter privacy amplification bounds, compared with the one obtained by general subsampled RDP bound that supports any mechanisms (the ``black-box bound'') \cite{zhu19poisson}.
The proof tackled several additional difficulties compared with the analysis for simple Gaussian mechanism \citep{mironov2019r}. 
By numerical verification, we show that our RDP bound for subsampled PTR is much tighter than the black-box bound, and is close to the lower bound (Figure \ref{fig:rdp-compare}.).



\textbf{Applications of PTR in ML.} 
\add{
PTR is especially suitable for improving the utility of privatizing robust statistics such as trimmed mean, as these functions usually have a much smaller local sensitivity compared to their global sensitivity. A critical application of robust statistics for machine learning is defending against corrupted data.}
The learning algorithm should be robust in the presence of corrupted data (referred to as \emph{Byzantine failure} \cite{lamport1982byzantine}). 
Since privacy and robustness are two major concerns for ML training, developing techniques that achieve both goals simultaneously is desirable. We demonstrate the application of PTR in incorporating differential privacy with robust SGD methods. 
\add{A popular} technique for robustifying SGD is to replace the mean with some robust statistics (e.g., trimmed mean) for gradient aggregation \cite{yin2018byzantine, acharya2021robust, gupta2021byzantine}. 
We use trimmed mean as an example of showing how to augment robust SGD with DP through PTR. We show that the augmented SGD still maintains the robustness guarantee. We conduct extensive experiments on defending against three kinds of attacks: label, feature, and gradient corruptions, and we show that PTR-based robust SGD achieves much better utility than naively privatizing the robust SGD with global sensitivity.

\section{Related Work}
\label{sec:relatedwork}

Since the seminal work of \cite{dwork2009differential}, the Propose-Test-Release has become a common DP framework mainly used in statistical inference \citep{brunel2020propose, liu2021differential}. 
To the best of our knowledge, this work is the first application of PTR in machine learning. 

\add{
Several prior works (e.g., \cite{li2012sampling, balle2018privacy, wang2019subsampled, zhu19poisson}) focus the general subsampled DP/RDP bound that supports any mechanisms. \cite{wang2019subsampled} derives a general RDP privacy amplification for ``sampling without replacement'' scheme, and \cite{zhu19poisson} obtains a similar result under Poisson subsampling (i.e., including each data point independently at random with certain probability). Only a small body of recent works study algorithm-specific privacy amplification bound by subsampling. However, most of them focus on subsampled Gaussian mechanism \cite{DBLP:conf/ccs/AbadiCGMMT016, bun2018composable, mironov2019r}. This work derives the first subsampled RDP bound specific for flexible algorithms such as PTR.}

\add{Designing machine learning and optimization algorithms that achieve both privacy and Byzantine-robustness is certainly an important direction. Nevertheless, there are only few works on this line so far.} 
\cite{guerraoui2021combining} considered the problem of achieving privacy and byzantine resilience in distributed SGD with an untrusted server. The privacy level they considered is essentially local differential privacy, which is orthogonal to our focus. \cite{he2020secure} and \cite{so2020byzantine} aim for both robustness and secure multiparty computation instead of differential privacy. 
\cite{ma2019data} naively applies differential privacy to defend against data poisoning attack. However, they find that DP alone cannot defend adversaries that poison a large fraction of training examples. 
Aside from the setting of training ML models, \cite{liu2021robust} propose a polynomial time algorithm that achieves both goals for mean estimation via privatizing filter-based robust mean estimator \cite{diakonikolas2017being}. \cite{esfandiari2021tight} develop a robust and differentially private mean estimator based on exponential mechanism. However, both of their approaches become inefficient (even in polynomial time asymptotically) in high dimensional settings.

\section{Background}
\label{sec:preliminary}


In this section, we introduce some background on differential privacy, \Renyi differential privacy, and privacy-amplification by subsampling. We will also introduce notations as we proceed.

\textbf{Differential Privacy. }
Differential privacy is a framework for protecting privacy when performing statistical releases on a dataset with sensitive information about individuals (see the surveys \cite{dwork2014algorithmic,vadhan2017complexity}). Specifically, for a differentially private mechanism, the probability distribution of the mechanism's outputs of a dataset should be close to the distribution of its outputs on the same dataset with any single individual’s data replaced. To formalize this, we call two datasets $S$, $S'$, each multisets over a data universe $\cX$, {\em adjacent} if one can be obtained from the other by \emph{adding or removing} a single element of $\cX$. Further, we use $d(S, S')$ to denote the number of times of adding/removing of data points to transform $S$ to $S'$. So $S$ and $S'$ are adjacent if and only if $d(S, S') = 1$. 

\begin{definition}[Differential Privacy \cite{dwork2006calibrating}] \label{def:DP}
For $\eps, \delta \ge 0$, a randomized algorithm $\M : \MS(\cX)\rightarrow \cY$ is 
{\em $(\eps, \delta)$-differentially private} if for every dataset pair $S, S'\in \MS(\cX)$ such that $d(S, S')=1$, we have: 
\begin{equation} \label{req:approxDP}
\forall\ T\subseteq \cY\ \Pr[\M(S) \in T] \le e^\eps\cdot \Pr[\M(S') \in T] + \delta
\end{equation}
where the randomness is over the coin flips of $\M$. 
\end{definition}

\textbf{\Renyi Differential Privacy (RDP).}
\Renyi differential privacy (RDP) is a variant of the standard $(\eps, \delta)$-DP that uses R{\'e}nyi-divergence as a distance metric \add{between the output distributions of $\M(S)$ and $\M(S')$}, which is particularly useful in training differentially private machine learning models. 

\begin{definition}[\Renyi Differential Privacy \cite{mironov2017renyi}]
We say that a mechanism $\M$ is $(\alpha, \eps)$-RDP with order $\alpha \in (1, \infty)$ if for every dataset pair $S, S'\in \MS(\cX)$ such that $d(S, S')=1$, we have: 
\begin{align}
D_{\alpha}\left(\M(S) \|  \M\left(S^{\prime}\right)\right) :=\frac{1}{\alpha-1} \log \E_{o \sim \M\left(S^{\prime}\right)}\left[\left(\frac{\mu_{\M(S)}(o)}{\mu_{\M\left(S^{\prime}\right)}(o)}\right)^{\alpha}\right] \leq \eps
\end{align}
where $\mu_\M(\cdot)$ denotes the density function of $\M$'s distribution. Further, we denote the moment $E_\alpha \left(\M(S) \|  \M\left(S^{\prime}\right)\right) := \E_{o \sim \M\left(S^{\prime}\right)}\left[\left(\frac{\mu_{\M(S)}(o)}{\mu_{\M\left(S^{\prime}\right)}(o)}\right)^{\alpha}\right]$ and function $\rdpfunc(\eps) := \exp( (\alpha-1)\eps )$.
\end{definition}
As we can see, $(\alpha, \eps)$-RDP is essentially an upper bound for the moment
$
E_\alpha \left(\M(S) \|  \M\left(S^{\prime}\right)\right) \le \rdpfunc(\eps)
$
for all adjacent $S, S'$, where $\eps$ can be viewed as a degree of the privacy loss incurred by running $\M$.
A different $\alpha$ typically leads to a different privacy bound $\eps$. Following the convention of literature \citep{zhu19poisson}, we view $\eps$ as a function of $\alpha$, and the notation $\eps_\M(\alpha)$ means the algorithm $\M$ obeys $(\alpha, \eps_\M(\alpha))$-RDP. 
We note that RDP reduces to $(\eps, 0)$-DP when we take $\alpha = \infty$. 
The current tightest $(\eps, \delta)$-DP/RDP transformation we are aware of is by \cite{asoodeh2021three}.
A central property of DP/RDP is its behavior under composition. If we run multiple distinct differentially private algorithms on the same dataset, the resulting composed algorithm is also differentially private, with some degradation in the privacy parameters $(\eps, \delta)$. 
Specifically, if we run $k$ sequentially chosen $(\eps, \delta)$-DP algorithm on a dataset, the overall composed privacy parameter is $\left(\tilde{O}(\sqrt{k} \eps), k \delta+\delta^{\prime}\right)$-DP by Advanced composition theorem \citep{dwork2014algorithmic}. However, the Advanced composition theorem for $(\eps, \delta)$-DP is loose. 
On the contrary, the composition is trivial for RDP as $\eps_{\M_1 \circ \M_2}(\cdot) = \eps_{\M_1}(\cdot) + \eps_{\M_2}(\cdot)$. 
The Moment Accountant technique, which composes RDP and then transforms to DP, is a much simpler approach and often produces much more favorable privacy parameters than directly composing $(\eps, \delta)$-DP. Therefore, RDP and Moment Accountant are widely used to calculate the privacy guarantee in training differentially private deep learning models.

\textbf{Privacy amplification by subsampling.}
``Privacy amplification by subsampling'' is the other booster besides RDP / moments accountant that drives much of the recent progress in differentially private deep learning. 
Most of the existing works focus on Poisson subsampling \cite{balle2018privacy}, which samples each data point independently with a given sampling probability $q$. 
The tightest privacy amplification bound we are aware of for general mechanism under Poisson Subsampling is from \cite{zhu19poisson}. 

\section{Propose-Test-Release and \Renyi Differential Privacy}
\label{sec:main-theory}

In this section, we introduce a general version of the Propose-Test-Release framework, present our main results on its RDP bound, and its privacy amplification bound under Poisson subsampling. 


In our presentation, we use $\globalSen_f = \sup_{S, S': d(S, S')=1} \norm{f(S)-f(S')}$ to denote the global sensitivity of target function $f$ (in $\ell_2$ distance), and we use $\localSen_f(S) = \sup_{S, S': d(S, S')=1} \norm{f(S)-f(S')}$ to denote the local sensitivity of function $f$ on dataset $S$. Before we present our main results, we would like to remind the readers about the definition of Laplace and Gaussian mechanisms, as well as their DP/RDP bound, which will be referred to later in the main results. 

\add{
\textbf{Laplace Mechanism:} If $f$'s output is 1-dimensional, the Laplace mechanism $\M_{\lap, b}(S) = f(S) + \lap(0, b)$ obeys $(\epslap^{(\Tilde{b})}, 0)$-DP for $\epslap^{(\Tilde{b})} = e^{1/\Tilde{b}}$, and obeys $(\alpha, \epsRlap^{(\Tilde{b})}(\alpha))$-RDP for 
$
\epsRlap^{(\Tilde{b})}(\alpha) = \frac{1}{\alpha-1} \log \left( \frac{\alpha}{2\alpha-1}\exp \left(\frac{\alpha-1}{\Tilde{b}}\right) + \frac{\alpha-1}{2\alpha-1} \exp \left(-\frac{\alpha}{\Tilde{b}}\right)\right)
$, where $\Tilde{b} = b/\globalSen_f$ (the ``noise-to-sensitivity'' ratio). 
}

\add{
\textbf{Gaussian Mechanism:}
If $f$'s output is $d$-dimensional, the Gaussian mechanism $\M_{\N, \sigma}(S) = f(S) + \N(0, \sigma^2 \iden_d)$ obeys $(\eps_{\N}^{(\Tilde{\sigma})}(\delta), \delta)$-DP for $\eps_{\N}^{(\Tilde{\sigma})}(\delta) = \Tilde{\sigma}\sqrt{2\log(1.25/\delta)}$, and obeys $(\alpha, \epsRgau^{(\Tilde{\sigma})}(\alpha))$-RDP for $\epsRgau^{(\Tilde{\sigma})}(\alpha) = \frac{\alpha}{2 \Tilde{\sigma}^2 }$, where $\Tilde{\sigma} = \sigma/\globalSen_f$ (the ``noise-to-sensitivity'' ratio).
}

\subsection{Propose-Test-Release} 
\begin{wrapfigure}{R}{0.55\textwidth}
\centering
\begin{minipage}{0.5\textwidth}
\begin{algorithm}[H]
\SetAlgoLined
\SetKwInOut{Input}{input}
\SetKwInOut{Output}{output}
\Input{
$S$ -- dataset, 
$f_1$ -- target function, 
$f_2$ -- robust statistic, 
$\tau$ -- proposed local sensitivity bound of $f_2$, 
$\sigma_1, \sigma_2, b$ -- Gaussian/Laplace noise scales ($\sigma_1 > \sigma_2$), 
$\delta_0$ -- failure probability. 
}

$\Delta \leftarrow \min_{\Tilde{S} \in \{ \Tilde{S}: \localSen_{f_2}( \Tilde{S} ) > \tau \} } d \left(S, \Tilde{S} \right)$. 

$\widehat \Delta \leftarrow \Delta + \lap(0, b)$. 

\If{$\widehat \Delta \le \log(1/(2\delta_0)) b$}{
    \Return{$f_1(S) + \N(0, \sigma_1^2\iden_d)$}
}\Else{
    \Return{$f_2(S) + \N(0, \sigma_2^2\iden_d)$}
}

\caption{Propose-Test-Release with Laplace and Gaussian mechanism.}
\label{alg:ptr-general}
\end{algorithm}
\end{minipage}
\end{wrapfigure}
Naive use of Laplace/Gaussian mechanism may result in the poor utility of function output, as the global sensitivity of the target function may be intolerably large due to some extreme cases. Meanwhile, robust statistics such as median, mode, Inter-Quantile Range (IQR) are quite insensitive to single data addition/removal for datasets that are i.i.d. drawn from natural distributions. This means that robust statistics may have a small local sensitivity for most input datasets. 
The seminal work of \cite{dwork2009differential} introduced Propose-Test-Release framework to reduce the noise addition when the target function has an approximation that is a robust statistic\footnote{Sometimes the robust statistic itself is the target function, \rebuttal{i.e., $f_1=f_2$ in Algorithm \ref{alg:ptr-general}}.}. 
Here, we introduce a more general and useful version of PTR instantiated by Laplace and Gaussian mechanism. 
Given a target function $f_1$ (e.g., mean) and its robust variant $f_2$ (e.g., median), the PTR framework proceeds as follows: 
\textbf{(1) Propose:} a local sensitivity bound of the target function, $\tau$, is proposed; 
\textbf{(2) Test:} a safety margin $\Delta(S)$, which is the minimum amount of data points that we need to replace for $S$ to have local sensitivity larger than $\tau$, is computed. A private version (via Laplace mechanism) of the safety margin, $\widehat \Delta$, is compared with a threshold; 
\textbf{(3) Release:} if the safety margin is large enough, then the algorithm releases $f_2(S)$ via Gaussian mechanism with a smaller noise $\sigma_2$ (usually scaled with $\tau$). Otherwise, the algorithm release $f_1(S)$ with a larger noise $\sigma_1$ (usually scaled with $\globalSen_{f_1}$). The pseudocode is outlined in Algorithm \ref{alg:ptr-general}.

We make several remarks on Algorithm \ref{alg:ptr-general}. 

\begin{remark}
\textbf{(1)} The traditional version of PTR in the textbooks \cite{dwork2014algorithmic, vadhan2017complexity} simply refuse to output anything when the noisy safety margin $\widehat{\Delta}$ is small. Algorithm \ref{alg:ptr-general} is a more general version which output $f_1(S) + \N(0, \sigma_1^2 \iden_d)$ when the sensitivity test is failed. The textbook PTR can be thought of as a special case of Algorithm \ref{alg:ptr-general} where we set $\sigma_1 \rightarrow \infty$. \textbf{(2)} The mechanism used in the \textbf{Test} and \textbf{Release} step can be other mechanisms instead of Laplace and Gaussian. In this paper, we present our results for the PTR instantiated by these two mechanisms since we intend to apply it for differentially private SGD later. Our proof can be easily extended to other mechanisms as well. 
\textbf{(3)} The threshold $\log(1/(2\delta_0)) b$ in Line 3 is chosen so that $\Pr[\lap(0, b) > \log(1/(2\delta_0)) b] = \delta_0$. 
\textbf{(4)} The global sensitivity of $\Delta(\cdot)$ is 1 for any functions. This could be seen by noticing that, for any pair of adjacent $S, S'$, we have $d (S, \Tilde{S} ) \le d (S', \Tilde{S})+1$ for any dataset $\Tilde{S}$. 
\end{remark}

\paragraph{Direct DP Analysis of Propose-Test-Release.}

We show the DP guarantee for the Propose-Test-Release framework. To prove privacy, we need to find the worst pair of adjacent datasets $S$ and $S'$ that incurs the largest privacy loss. It is clear that the worst possible scenario of PTR is when $f_2(S)-f_2(S') > \tau$, while the algorithm still releases $f_2(S)$ with noise scaled with $\tau$. However, note that this worst possible scenario will only happen when \emph{both $\localSen(S)$ and $\localSen(S')$ are greater than $\tau$}. 
In this case, however, $\Delta$ for both $S$ and $S'$ are 0 since there are no data points we need to change for them to have local sensitivity larger than $\tau$, and thus there is no privacy loss from $\widehat \Delta$. Moreover, when $\Delta=0$, the probability that PTR will release $f_2(S) + \N(0, \sigma_2^2\iden_d)$ is at most $\delta_0$ \rebuttal{by construction}, which could be simply added to the $\delta$ term in $(\eps, \delta)$-DP. If one of $\localSen(S)$ and $\localSen(S')$ is smaller than $\tau$, then we know that $f_2(S)-f_2(S') \le \tau$, and the overall privacy parameters could be computed by Basic Composition Theorem \cite{dwork2014algorithmic}. We obtain the following differential privacy guarantee for PTR based on these observations. We defer more details of the proof to the Appendix \ref{appendix:proof}.

\begin{theorem}[Direct DP analysis for PTR]
Suppose $\globalSen_{f_1} = \globalSen_{f_2} = 1$ and $\sigma_1 = \sigma_2 / \tau$, then Algorithm \ref{alg:ptr-general} is $(\epslap^{(b)} + \epsgau^{(\sigma_1)}(\delta), \delta_0+\delta)$-DP, where $\epslap^{(b)} = e^{1/b}$ is the DP guarantee for Laplace mechanism, and $\eps_{\N}^{(\sigma_1)}(\delta) = \sigma_1\sqrt{2\log(1.25/\delta)}$. 
\label{thm:dp-for-ptr-maintext}
\end{theorem}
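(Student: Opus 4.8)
The plan is to fix an arbitrary adjacent pair $S, S'$ with $d(S,S')=1$ and verify the $(\eps,\delta)$-DP inequality $\Pr[\M(S)\in T]\le e^{\epslap^{(b)}+\eps_{\N}^{(\sigma_1)}(\delta)}\Pr[\M(S')\in T]+(\delta_0+\delta)$ for every measurable $T$, splitting the argument into two cases according to how the local sensitivities $\localSen_{f_2}(S),\localSen_{f_2}(S')$ compare with the proposed bound $\tau$. The structural observation driving everything is that $\M$ is a two-stage mechanism: a private test $\M_1$ that effectively releases the bit $\beta=\Ind[\widehat\Delta\le\log(1/(2\delta_0))b]$, followed by a release step whose Gaussian mechanism (either $f_1$ with scale $\sigma_1$ or $f_2$ with scale $\sigma_2$) is selected by $\beta$.

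\textbf{Case A (at least one of $\localSen_{f_2}(S),\localSen_{f_2}(S')$ is at most $\tau$).} Since $d(S,S')=1$, the definition of local sensitivity immediately gives $\norm{f_2(S)-f_2(S')}\le\tau$. I would then treat $\M$ as the adaptive composition of $\M_1$ and the release. The test is a post-processing of $\Delta+\lap(0,b)$; because $\Delta(\cdot)$ has global sensitivity $1$ (Remark on Algorithm~\ref{alg:ptr-general}), $\M_1$ is $(\epslap^{(b)},0)$-DP. For each fixed value of $\beta$ the release is a Gaussian mechanism whose noise-to-sensitivity ratio equals $\sigma_1$: on the $f_1$-branch it is $\sigma_1/\globalSen_{f_1}=\sigma_1$, and on the $f_2$-branch it is $\sigma_2/\tau=\sigma_1$ by the hypothesis $\sigma_1=\sigma_2/\tau$ together with the bound $\norm{f_2(S)-f_2(S')}\le\tau$ just established. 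Hence each branch is $(\eps_{\N}^{(\sigma_1)}(\delta),\delta)$-DP for this pair, and basic composition followed by post-processing yields $(\epslap^{(b)}+\eps_{\N}^{(\sigma_1)}(\delta),\delta)$-DP, which is within the claimed budget.

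\textbf{Case B (both $\localSen_{f_2}(S)>\tau$ and $\localSen_{f_2}(S')>\tau$).} Then each of $S,S'$ already lies in the set $\{\tilde S:\localSen_{f_2}(\tilde S)>\tau\}$, so $\Delta(S)=\Delta(S')=0$. Consequently the noisy test reduces to $\widehat\Delta=\lap(0,b)$ with identical distribution for both datasets, and by the threshold's construction the $f_2$-branch is entered with probability exactly $\delta_0$ in each case. I would write the output density as the mixture $\mu_{\M(S)}=p\,g_S^{(1)}+(1-p)\,g_S^{(2)}$, where $p=1-\delta_0$ is the common $f_1$-branch probability and $g_S^{(1)},g_S^{(2)}$ are the two branch densities. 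For any $T$, bounding the second term by its total mass gives $\Pr[\M(S)\in T]\le p\int_T g_S^{(1)}+\delta_0$. Because the mixing weight $p$ is identical for $S$ and $S'$, no Laplace privacy loss is incurred, and the $f_1$-branch is a Gaussian mechanism on $f_1$ with $\globalSen_{f_1}=1$ and noise $\sigma_1$, so $\int_T g_S^{(1)}\le e^{\eps_{\N}^{(\sigma_1)}(\delta)}\int_T g_{S'}^{(1)}+\delta$. Chaining these estimates yields $\Pr[\M(S)\in T]\le e^{\eps_{\N}^{(\sigma_1)}(\delta)}\Pr[\M(S')\in T]+\delta+\delta_0$, again within the claimed budget (with slack in the $\eps$ term). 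Combining the two cases establishes the theorem.

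The main obstacle is Case B: one must recognise that the genuinely dangerous event --- releasing $f_2(S)$ with noise scaled only to $\tau$ while $\norm{f_2(S)-f_2(S')}$ may exceed $\tau$ --- can occur only when $\Delta=0$ on both sides, and then exploit this fact twice. First, identical safety margins force identical mixing weights $p$, so the test contributes no Laplace privacy cost; second, the entire bad branch is charged to $\delta_0$ by bounding its \emph{probability} rather than its (potentially unbounded) privacy loss. Keeping these two effects separate, and verifying that the threshold $\log(1/(2\delta_0))b$ makes the $f_2$-branch probability exactly $\delta_0$, is the crux of the argument; the remaining ingredients are the Laplace and Gaussian DP facts and basic composition recalled in Section~\ref{sec:preliminary}.
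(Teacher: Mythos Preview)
Your proposal is correct and follows essentially the same approach as the paper: the identical two-case split on whether both local sensitivities exceed $\tau$, with basic composition of the Laplace test and the Gaussian release in the ``safe'' case, and in the ``dangerous'' case exploiting $\Delta(S)=\Delta(S')=0$ to eliminate the Laplace loss and charge the $f_2$-branch entirely to $\delta_0$. Your Case~B write-up is in fact slightly more explicit than the paper's (you spell out the mixture inequality $\Pr[\M(S)\in T]\le p\int_T g_S^{(1)}+\delta_0$ and the chaining step), but the substance is the same.
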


\subsection{RDP of Propose-Test-Release}

We then derive the RDP of Propose-Test-Release framework. The major differences between the proofs of $(\eps, \delta)$-DP and RDP bound is that, for $(\eps, \delta)$-DP we can move the probability of running into the bad scenario to the $\delta$ term, while for RDP we need to consider the ``average-case'' privacy loss. This is in fact an advantage of RDP, \rebuttal{which we illustrate it in the following simple example}.


\rebuttal{
\textbf{Comparison between $(\eps, \delta)$-DP and RDP Analysis: A motivating example.}\footnote{Details of derivation can be found in Appendix \ref{appendix:proof}.}
Suppose we have two mechanisms $\M_1$ and $\M_2$ who are $(\eps_1, \delta_1)$-DP and $(\eps_2, \delta_2)$-DP, respectively. 
Consider a simple PTR-like mechanism $\M$ that randomly picks one of mechanisms $\M_1$ and $\M_2$ to execute, each with probability $1-\delta_0$ and $\delta_0$, respectively\footnote{For the actual PTR, the $\delta_0$ is not fixed but depends on the input dataset.}. 
We can only claim that $\M$ is $(\max(\eps_1, \eps_2), (1-\delta_0) \delta_1 + \delta_0 \delta_2)$-DP, or if we know $\eps_2 \gg \eps_1$ we can also move the ``bad case'' probability $\delta_0$ to the $\delta$ term and obtain $(\eps_1, \delta_0 + \delta_1)$-DP. 
However, if we know the RDP guarantee of $\M_1$ and $\M_2$ as
$E_\alpha( \M_1(S) \| \M_1(S') ) \le f_\alpha(\eps_1)$
and 
$E_\alpha( \M_2(S) \| \M_2(S') ) \le f_\alpha(\eps_2)$\footnote{Recall that $f_\alpha(\eps) = \exp((\alpha-1)\eps)$ where if $\M$ is $(\alpha, \eps)$-RDP then $E_\alpha(\M(S) \| \M(S')) \le f_\alpha(\eps)$.}, 
then $E_\alpha(\M(S) \| \M(S'))$ can be simply bounded by $(1-\delta_0)f_\alpha(\eps_1) + \delta_0 f_\alpha(\eps_2)$. 
Compared with $(\eps, \delta)$-DP analysis, there are no extra inequalities used in RDP analysis of $\M$ except for the RDP guarantee of $\M_1$ and $\M_2$. 
Thus, RDP is more favorable in for PTR's privacy analysis, 
especially when $\delta_0$ is close to the target $\delta$. 
}

\begin{theorem}[RDP analysis of PTR]
Suppose $\globalSen_{f_1} = \globalSen_{f_2} = 1$ and $\sigma_1 = \sigma_2 / \tau$. 
Then for any $\alpha > 1$, Algorithm \ref{alg:ptr-general} is $(\alpha, \eps_{\ptr}(\alpha))$-RDP for 
\begin{align}
    &\eps_{\ptr} (\alpha) \le \max \left( \rdpfunc^{-1}\left( (1-\delta_0) \rdpfunc \left( \epsRgau^{(\sigma_1)}(\alpha) \right)+ \delta_0 \rdpfunc \left( \epsRgau^{(\sigma_2)}(\alpha) \right) \right),
    \epsRgau^{(\sigma_1)}(\alpha) + \epsRlap^{(b)}(\alpha) \right) \label{eq:ptr-rdp-bound} \nonumber
\end{align}
\label{thm:rdp-for-ptr-maintext}
\end{theorem}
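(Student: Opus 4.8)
The plan is to bound $D_\alpha\!\left(\M(S) \,\|\, \M(S')\right)$ for an arbitrary adjacent pair $S,S'$ and then take the supremum, splitting into two exhaustive cases according to the local sensitivity of $f_2$. For any adjacent pair, either (i) both $\localSen_{f_2}(S) > \tau$ and $\localSen_{f_2}(S') > \tau$, or (ii) at least one of them is at most $\tau$, in which case $\norm{f_2(S) - f_2(S')} \le \tau$. These two cases produce the two arguments of the $\max$ in the statement, and since RDP is the supremum over adjacent pairs, the larger of the two bounds is the claimed $\eps_{\ptr}(\alpha)$.

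For case (i), the key observation is that $\Delta(S) = \Delta(S') = 0$: since both datasets already have local sensitivity exceeding $\tau$, no point needs to be changed, so the safety margin vanishes for both. Hence $\widehat\Delta = \lap(0,b)$ has an \emph{identical} distribution under $S$ and $S'$, and each branch fires with the same probability under both datasets---branch 1 (release $f_1$ with noise $\sigma_1$) with probability $1-\delta_0$ and branch 2 (release $f_2$ with noise $\sigma_2$) with probability $\delta_0$, by the threshold choice of Remark (3). The output density under $S$ is therefore a mixture $(1-\delta_0)\,\nu_1^{S} + \delta_0\,\nu_2^{S}$ of the two Gaussian-release densities, with the \emph{same} weights under $S'$. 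I would then invoke the joint convexity of $(x,y) \mapsto x^\alpha y^{1-\alpha}$ on $(0,\infty)^2$ (valid for $\alpha>1$), exactly as in the motivating example, to bound the integrand $\mu_{\M(S)}(o)^\alpha \mu_{\M(S')}(o)^{1-\alpha}$ pointwise by the weighted sum of the corresponding integrands for the two releases. Integrating gives $E_\alpha\!\left(\M(S)\,\|\,\M(S')\right) \le (1-\delta_0)\,\rdpfunc\!\left(\epsRgau^{(\sigma_1)}(\alpha)\right) + \delta_0\,\rdpfunc\!\left(\epsRgau^{(\sigma_2)}(\alpha)\right)$, where the release of $f_2$ is controlled by its global sensitivity $1$ (noise-to-sensitivity ratio $\sigma_2$). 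Applying $\rdpfunc^{-1}$ recovers the first argument of the $\max$.

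For case (ii), I would instead bound the RDP by an adaptive composition. Since the released value $Y$ is a projection (post-processing) of the augmented output $(\widehat\Delta, Y)$, the data-processing inequality for \Renyi divergence lets me analyze the augmented mechanism, which composes the Laplace release of $\widehat\Delta$ (the \textbf{Test}) with the Gaussian release of $Y$ (the \textbf{Release}, conditioned on the branch). The test contributes $\epsRlap^{(b)}(\alpha)$, using that $\Delta(\cdot)$ has global sensitivity $1$ (Remark (4)). For the release, the crux is the hypothesis $\sigma_1 = \sigma_2/\tau$: branch 1 has noise-to-sensitivity ratio $\sigma_1$, while branch 2 has ratio $\sigma_2 / \norm{f_2(S) - f_2(S')} \ge \sigma_2/\tau = \sigma_1$; since $\epsRgau^{(\Tilde{\sigma})}(\alpha) = \alpha/(2\Tilde{\sigma}^2)$ is decreasing in $\Tilde{\sigma}$, both branches contribute at most $\epsRgau^{(\sigma_1)}(\alpha)$, a bound uniform over the value of $\widehat\Delta$. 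Adaptive composition of RDP then gives $\epsRgau^{(\sigma_1)}(\alpha) + \epsRlap^{(b)}(\alpha)$, the second argument of the $\max$.

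The step I expect to be the main obstacle is making case (i) rigorous: the mixture structure blocks a naive composition, and the entire tightness of the bound rests on the joint-convexity step together with the verification that the branch probabilities genuinely coincide under $S$ and $S'$ (i.e., that $\Delta = 0$ forces identical $\widehat\Delta$ distributions). I would be careful to confirm the pointwise joint-convexity inequality with the correct exponents and that each Gaussian release is individually RDP-bounded before integrating. A secondary subtlety is the composition justification in case (ii): the Gaussian release must admit an RDP bound \emph{uniform} over the test outcome $\widehat\Delta$, which is precisely what the calibration $\sigma_1 = \sigma_2/\tau$ guarantees.
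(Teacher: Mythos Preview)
Your proposal is correct and follows essentially the same route as the paper: the identical two-case split on whether both local sensitivities of $f_2$ exceed $\tau$, the observation that $\Delta(S)=\Delta(S')=0$ forces $\widehat\Delta$ to be identically distributed in Case~(i), the bound $\norm{f_2(S)-f_2(S')}\le\tau$ in Case~(ii), and the use of the calibration $\sigma_1=\sigma_2/\tau$ to make both release branches share the Gaussian RDP $\epsRgau^{(\sigma_1)}(\alpha)$. The only presentational difference is that the paper works throughout with the \emph{joint} output $(\widehat\Delta,Y)$ and factors via the chain rule $\mu(s,t)=\mu(s)\mu(t\mid s)$, whereas you handle Case~(i) by applying joint convexity of $(x,y)\mapsto x^\alpha y^{1-\alpha}$ to the marginal mixture over $Y$; since the $\widehat\Delta$-marginals coincide in that case, the two computations produce exactly the same weighted-average bound, and your Case~(ii) composition argument is precisely the chain-rule computation written at the theorem level.
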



The above result implies that, with appropriately chosen $\delta_0$, the privacy loss of PTR is the same as directly adding up the privacy loss from $\widehat \Delta$ and from releasing $f_1(S) + \N(0, \sigma_1^2\iden_d)$ regardless of the value of $\widehat \Delta$. 
Compared with naively releasing $f_1(S) + \N(0, \sigma_1^2\iden_d)$, PTR will only pay an extra cost for privatizing $\Delta$, independent from the privacy cost for the case of releasing $f_2(S) + \N(0, \sigma_2^2\iden_d)$! Thus, the algorithm could potentially add much smaller noise while introducing just a little extra privacy loss. 
In Figure \ref{fig:rdp-vs-dp-value}, we convert this RDP bound to $(\eps, \delta)$-DP via the RDP-DP conversion formula by \cite{balle2020hypothesis}, and compare with the $(\eps, \delta)$-DP bound obtained through direct analysis in Theorem \ref{thm:dp-for-ptr-maintext}. 
Although due to the loss in RDP-DP conversion, the $\eps$ parameter by RDP is worse than the one by direct analysis when $\delta_0$ is far smaller than the target $\delta$, it is tighter when $\delta_0$ is close to $\delta$.
\add{A larger $\delta_0$ means PTR has a greater chance to release $f_2(S) + \N(0, \sigma_2^2\iden_d)$, which leads to a better utility. Thus, the privacy parameter converted from the RDP bound is more favorable.}

\add{We would like to stress that in the theorem, most of the conditions on the parameters such as $\globalSen$ or $\sigma_1, \sigma_2$ can be easily relaxed. The conditions in the theorems are mainly used to make the presented bound more clean and interpretable.}



\begin{figure}[t]
    \centering
    \includegraphics[width=\columnwidth]{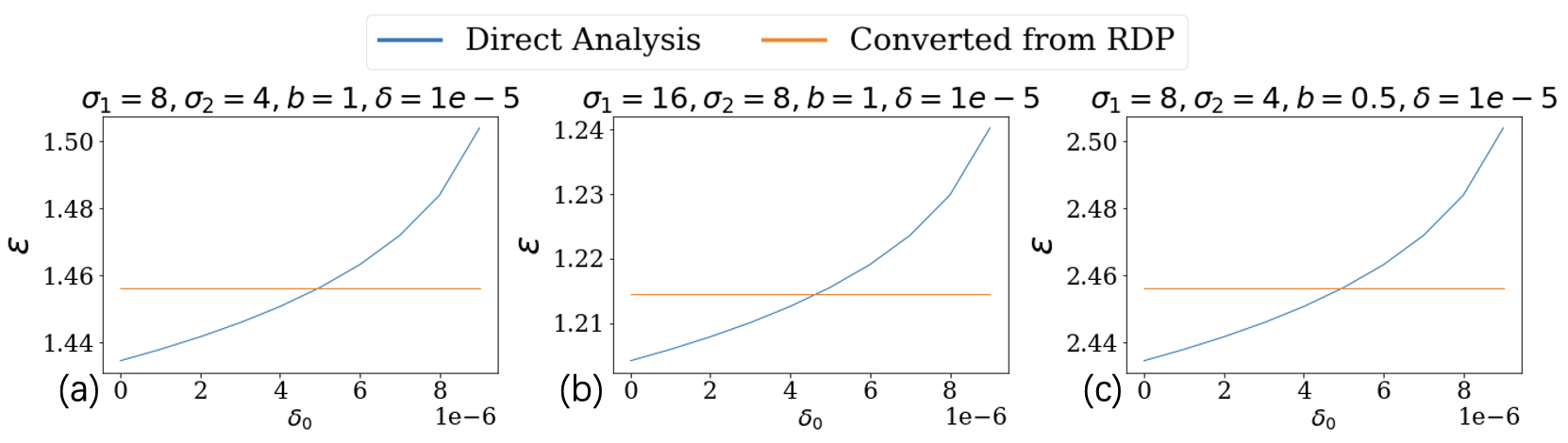}
    \caption{
    \add{
    The $\eps$ parameter of the $(\eps, \delta)$-DP guarantee of PTR when $\delta = 10^{-5}$ for different noise scales. 
    We convert the RDP bound in Theorem \ref{thm:rdp-for-ptr-maintext} to $(\eps, \delta)$-DP by the RDP-DP conversion formula from \cite{balle2020hypothesis}, and compare it with the $\eps$ obtained from the direct analysis in Theorem \ref{thm:dp-for-ptr-maintext}. 
    For the bound converted from RDP, we search for the optimal $\alpha \in [1, 200]$. 
    The bound is constant across different $\delta_0$ since when $\delta_0$ is small, the RDP for PTR will take the second term in (\ref{eq:ptr-rdp-bound}). 
    }
    }
    \label{fig:rdp-vs-dp-value}
\end{figure}

\subsection{RDP for Poisson Subsampled Propose-Test-Release}

\newcommand{\worst}{\mathcal{B}_{\text{0}}}
\newcommand{\easy}{\mathcal{B}_{\text{1}}}
\newcommand{\hard}{\mathcal{B}_{\text{2}}}

\newcommand{\rlapin}[1]{\mathrm{R}_q^{(#1)}}
\newcommand{\rlaptin}[1]{\widetilde{\mathrm{R}}_q^{(#1)}}

\add{
We further derive the RDP for Poisson subsampled PTR. Here, ``subsampled'' means the input dataset will be subsampled first before feeding to PTR. Poisson subsampling means each data point $x$ will be included with probability $q$ independently. One way of obtaining the RDP for any subsampled mechanism is by simply plugging in the RDP bound of the original algorithm into the privacy amplification formula for general mechanisms (e.g., \cite{wang2019subsampled, zhu19poisson}). Here, we directly derive the RDP for Poisson subsampled PTR in a white-box manner. 
Denote $\M_0$ as the random variable for PTR's output on dataset $S$, and $\M_1$ for PTR's output on dataset $S' = S \cup \{x\}$, and $\M = (1-q)\M_0 + q\M_1$. Compared with the analysis for subsampled Gaussian mechanism by \cite{mironov2019r}, there are two main difficulties for extending the arguments of quasi-convexity of \Renyi divergence for PTR: \textbf{(1)} the distribution of PTR's output may not be centrally symmetric, thus we need to bound both $D_\alpha( \M || \M_0 )$ and $D_\alpha( \M_0 || \M )$. \textbf{(2)} the conditional distribution $\M|\widehat \Delta$ cannot be decomposed as $(1-q) \M_0|\widehat \Delta + q \M_1|\widehat \Delta$. A big part of our novelty in the proof is about addressing these two challenges. We defer details of the proof to Appendix \ref{appendix:proof}. 
}

\begin{theorem}[RDP analysis of sub-sampled PTR (the \emph{``white-box bound''})]
\label{thm:subsampled-ptr}
Let $q$ be the subsampling probability. 
Suppose $\globalSen_{f_1} = \globalSen_{f_2} = 1$ and $\sigma_1 = \sigma_2 / \tau$. 
When $q, \sigma_1, \sigma_2$, and $\alpha$ satisfy certain conditions, 
we have 
\begin{align}
    \eps_{\ptr \circ \poisson}(\alpha) 
    \le \rdpfunc^{-1} \left( \max( \worst, \easy, \hard ) \right) \nonumber
\end{align}
where 
$\worst = 1 + 2q^2 \alpha (\alpha-1) ( \frac{1-\delta_0}{\sigma_1^2} + \frac{\delta_0}{\sigma_2^2} )$, 
$\easy = \rlapin{\alpha} + 2\alpha (\alpha-1) [ \rlapin{\alpha} - 2(1-q)\rlapin{\alpha-1} + (1-q)^2 \rlapin{\alpha-2} ]$, 
$
\hard = \rlaptin{\alpha} + 2\alpha (\alpha-1) [ \rlaptin{\alpha} - 2(1-q)\rlaptin{\alpha+1} + (1-q)^2 \rlaptin{\alpha+2} ]
$, 
with $\rlapin{\alpha} = \E_{s \sim \mu_0} \left[\left(\frac{\mu(s)}{\mu_0(s)}\right)^\alpha \right]$
and 
$\rlaptin{\alpha} = \E_{s \sim \mu} \left[\left(\frac{\mu_0(s)}{\mu(s)}\right)^\alpha \right]$
for $\mu_0 \sim \lap(0, b)$ and $\mu \sim (1-q) \lap(0, b) + q\lap(1, b)$. 
\end{theorem}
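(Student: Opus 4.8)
The plan is to adapt the subsampled-mechanism template of \cite{mironov2019r}, but to run the entire argument on the \emph{augmented} output $(\widehat\Delta, y)$, where $y$ denotes the Gaussian release. First I would use the quasi-convexity of \Renyi divergence to condition on the realization of Poisson subsampling restricted to $S$; it then suffices to bound, for the worst such realization, both $D_\alpha(\M \| \M_0)$ and $D_\alpha(\M_0 \| \M)$ with $\M = (1-q)\M_0 + q\M_1$. Both directions are genuinely needed because, unlike the Gaussian mechanism, the PTR output law is not centrally symmetric. Since the true mechanism only reveals $y$, the data-processing inequality lets me instead bound the divergence of the richer output that also exposes the noisy test statistic $\widehat\Delta$; this is the move that makes the branch deterministic given $\widehat\Delta$ and hence makes the calculation tractable, at the cost of (provably small) slack.

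The crux is the interaction flagged in the text: the conditional law $\M \mid \widehat\Delta$ does \textbf{not} split as $(1-q)\,\M_0\mid\widehat\Delta + q\,\M_1\mid\widehat\Delta$, because conditioning on $\widehat\Delta$ reweights the two components by their Laplace likelihoods (whose means $\Delta(S)$ and $\Delta(S')$ differ). I would sidestep this by observing that, although the conditional \emph{distribution} does not decompose, the pointwise \emph{likelihood ratio} between the two base PTR laws does: since each of $\mu_{\M_0}$ and $\mu_{\M_1}$ factors as $\mu(\widehat\Delta)\,\mu(y\mid\widehat\Delta)$, we have $\mu_{\M_1}/\mu_{\M_0} = L_{\mathrm{test}}\cdot L_{\mathrm{gau}}$, a product of the Laplace test ratio and the conditional Gaussian ratio. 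Writing $E_\alpha(\M\|\M_0) = \E_{(\widehat\Delta,y)\sim\M_0}\big[((1-q) + q\,L_{\mathrm{test}}L_{\mathrm{gau}})^\alpha\big]$ and expanding around $L_{\mathrm{gau}}=1$, the zeroth-order term integrates $y$ out and is exactly the subsampled Laplace moment $\rlapin{\alpha}$, while the second-order term is the Gaussian correction $2\alpha(\alpha-1)\,q^2\,\E_{s\sim\mu_0}\big[((1-q)+qL_{\mathrm{test}})^{\alpha-2}L_{\mathrm{test}}^2\big]$; one checks that this expectation equals $\rlapin{\alpha} - 2(1-q)\rlapin{\alpha-1} + (1-q)^2\rlapin{\alpha-2}$, reproducing the bracket in $\easy$. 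The reverse direction $D_\alpha(\M_0\|\M)$ is handled identically with the reference measure $\mu$ in place of $\mu_0$, yielding the $\rlaptin{\cdot}$ moments with orders shifted upward, i.e. $\hard$.

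The three terms then come from a case split on the safety margins $\Delta(S)$ and $\Delta(S')$ (which differ by at most the unit global sensitivity of $\Delta(\cdot)$). When both margins are $0$ — both datasets already violate the sensitivity bound $\tau$, the true worst case — the two Laplace laws coincide, so $L_{\mathrm{test}}\equiv 1$, the test leaks nothing, and the augmented law \emph{does} collapse to a genuine subsampled Gaussian. Here the small-noise branch $\sigma_2$ may be taken (with probability $\delta_0$), but the shift between $f_2(S)$ and $f_2(S')$ is still bounded by $\globalSen_{f_2}=1$; averaging the subsampled Gaussian RDP over the two branches (weights $1-\delta_0$ and $\delta_0$) produces $\worst = 1 + 2q^2\alpha(\alpha-1)(\tfrac{1-\delta_0}{\sigma_1^2} + \tfrac{\delta_0}{\sigma_2^2})$. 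When at least one margin is positive, the $f_2$-branch shift is controlled by $\tau$ (equivalently by $1/\sigma_1$ through $\sigma_2=\tau\sigma_1$) and the conditional Gaussian ratio stays uniformly close to $1$, so the dominant cost is the subsampled Laplace test moment plus the second-order Gaussian correction just described, giving $\easy$ and $\hard$ in the two directions. Taking the maximum over cases and directions and applying $\rdpfunc^{-1}$ yields the stated bound.

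I expect the main obstacle to be exactly the joint control in the positive-margin cases: because the branch is a function of $\widehat\Delta$ and $L_{\mathrm{test}}$ also depends on $\widehat\Delta$, the expectation over the augmented output does not factor, so I must split the $\widehat\Delta$-integral by branch and bound the conditional Gaussian moment uniformly within each region (shift $\le 1$ in the $\sigma_1$ branch, $\le \tau$ in the $\sigma_2$ branch), while simultaneously tracking that the Laplace reweighting does not inflate the higher-order terms of the expansion. Making this rigorous — rather than keeping only the two leading orders — is where the ``certain conditions'' on $q,\sigma_1,\sigma_2,\alpha$ enter: they are the analogues of the convergence/regularity constraints in \cite{mironov2019r} that guarantee the residual terms are dominated so that the clean second-order bound is valid.
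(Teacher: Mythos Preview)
Your proposal is correct and follows essentially the same route as the paper: quasi-convexity to fix a worst-case $T$, working on the joint output $(\widehat\Delta,y)$, the case split on whether both local sensitivities exceed $\tau$, and the same algebraic identity that turns the quadratic term into $\rlapin{\alpha}-2(1-q)\rlapin{\alpha-1}+(1-q)^2\rlapin{\alpha-2}$.

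Two small differences are worth noting. First, rather than Taylor-expanding $((1-q)+qL_{\mathrm{test}}L_{\mathrm{gau}})^\alpha$ in $L_{\mathrm{gau}}$, the paper observes that the \emph{conditional} law $\mu(t\mid s)$ is exactly a two-component Gaussian mixture with data-dependent weight $A(s)=q\mu_1(s)/\mu(s)$, i.e.\ $\mu(t\mid s)=(1-A)\mu_0(t\mid s)+A\mu_1(t\mid s)$; this lets them invoke the subsampled-Gaussian lemma of \cite{mironov2019r} as a black box at rate $A(s)$ (the ``certain conditions'' are precisely what make $\sup_s A(s)$ satisfy that lemma's hypotheses), and then expand $A(s)^2=1-2(1-q)\mu_0/\mu+(1-q)^2\mu_0^2/\mu^2$ to recover the bracket --- avoiding any separate control of higher-order Taylor terms. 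Second, for the reverse direction the paper does not redo the computation: it uses the identity $\rlaptin{\alpha}=\rlapin{1-\alpha}$ and substitutes $\alpha\mapsto 1-\alpha$ in the forward bound, which immediately shifts the orders upward and yields $\hard$. Finally, your data-processing step is unnecessary in this setup: Algorithm~\ref{alg:ptr-general} \emph{does} release $\widehat\Delta$, so the joint $(\widehat\Delta,y)$ is already the mechanism's output.
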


\begin{remark}
Both $\rlapin{\alpha}$ and $\rlaptin{\alpha}$ can be computed easily since they either have closed form or can be accurately computed via numerical integration in bounded range. 
\end{remark}

This bound may not be very interpretable, which is a typical feature for the privacy amplification bounds as they are meant to be implemented in practice. After all, constant matters for differential privacy practitioners! 
To show the tightness of our bound, we plug in the RDP of original PTR (Theorem \ref{thm:rdp-for-ptr-maintext}) to the current tightest privacy amplification formula for general mechanisms derived in \cite{zhu19poisson} (the \emph{``black-box bound''}), and compare it with our white-box bound in Theorem \ref{thm:subsampled-ptr}. 
As we can see from Figure \ref{fig:rdp-compare}, Theorem \ref{thm:subsampled-ptr} (orange curve) is much tighter than the black-box bound (blue curve). 
Moreover, it is very close to the lower bound of privacy amplification by \cite{zhu19poisson} (green line), which means that Theorem \ref{thm:subsampled-ptr} is near optimal. 
\add{
In Figure \ref{fig:subsampled-comp}, we illustrate the application of subsampled RDP bound in Moment Accountant. We can see that the privacy parameters for the composed mechanism obtained based on our white-box bound (Theorem \ref{thm:subsampled-ptr}) is tighter than the one by black-box bound, as well as the one by directly composing Theorem \ref{thm:dp-for-ptr-maintext} with strong composition theorem \cite{kairouz2015composition}. }
\add{
We remark that Direct Analysis achieves lower privacy loss with very few iterations since we set $\delta_0 = 10^{-8}$ to allow more iterations for Strong Composition of $(\eps, \delta)$-DP, which leads to a better $\eps$ for a single iteration (see Figure \ref{fig:rdp-vs-dp-value}). }

An exciting byproduct during our research on subsampled PTR is the privacy amplification bound for a $\M$ that outputs $\M_1(S), \M_2(S)$ sequentially and \emph{independently}. This result is important since such a $\M$ is instantiated in many existing techniques in improving differentially private deep learning. 
We defer this result to the Appendix.

\begin{figure}[t]
    \centering
    \includegraphics[width=\columnwidth]{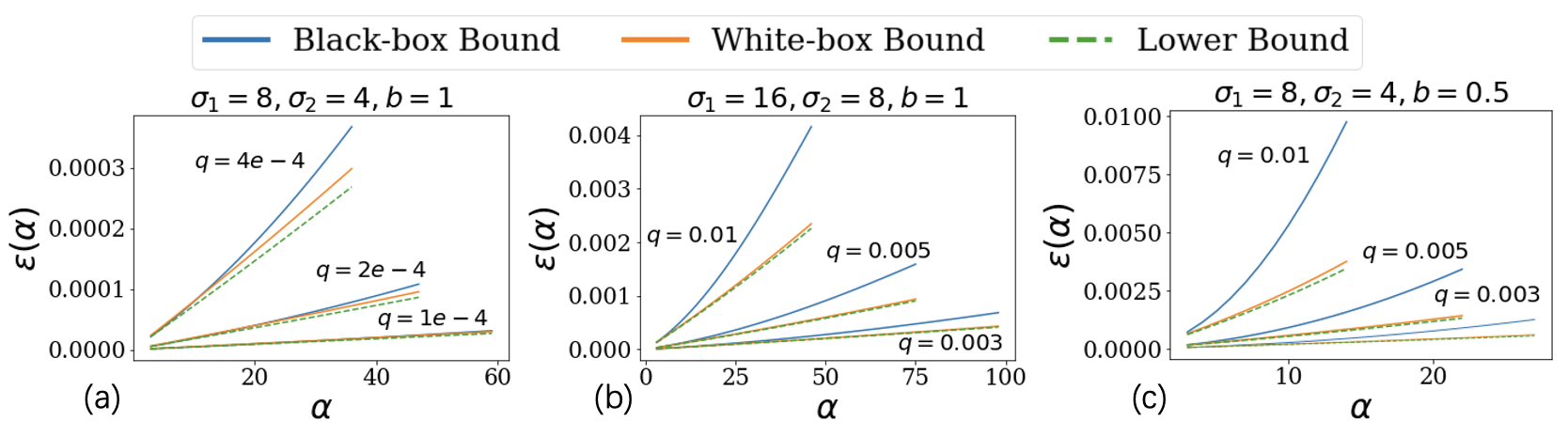}
    \caption{\add{The RDP parameter $\eps(\alpha)$ of the subsampled PTR as a function of order $\alpha$. We compare the white-box bound from Theorem \ref{thm:subsampled-ptr}, and the black-box bound as well as the lower bound by plugging in the RDP of PTR in Theorem \ref{thm:rdp-for-ptr-maintext} to the privacy amplification upper/lower bounds for general mechanisms from \cite{zhu19poisson}. 
    }
    }
    \label{fig:rdp-compare}
\end{figure}

\begin{figure}[t]
    \centering
    \includegraphics[width=\columnwidth]{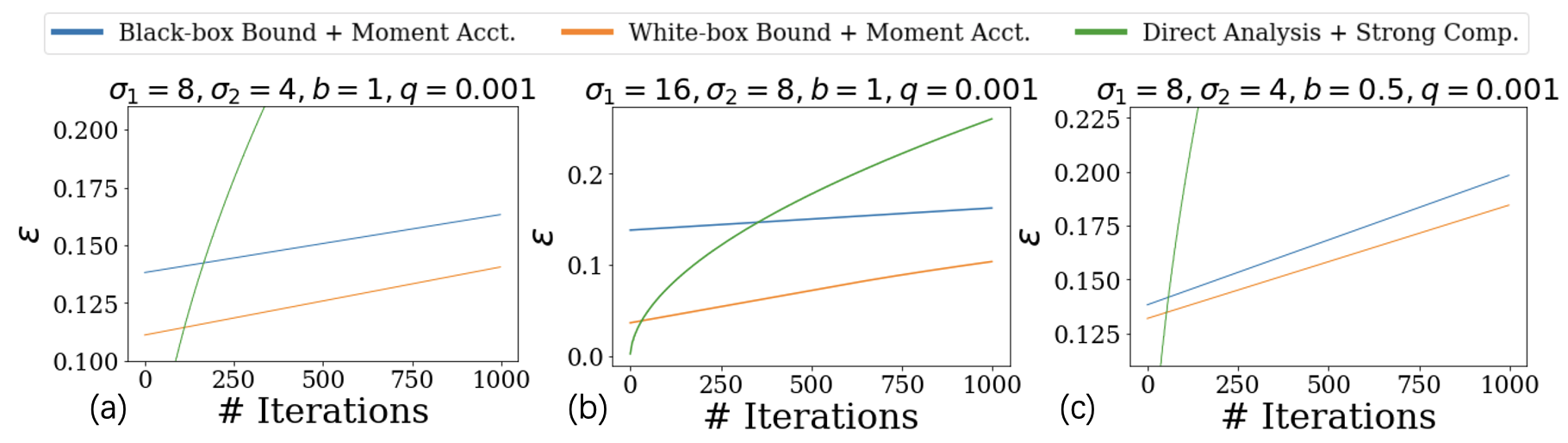}
    \caption{\add{
    Illustration of the use of our Theorem \ref{thm:subsampled-ptr} in moments accountant. We plot the the privacy loss $\eps$ for $\delta=10^{-5}$ after different rounds of composition. We set $\delta_0 = 10^{-8}$ here to allow more iterations for Strong Composition of $(\eps, \delta)$-DP. }
    }
    \label{fig:subsampled-comp}
\end{figure}

\section{Differentially Private and Robust SGD with Propose-Test-Release}
\label{sec:robust-and-dp-sgd}

In this section, we demonstrate the application of the RDP for (subsampled) PTR algorithm in privatizing robust SGD algorithms. 



\newcommand{\mean}{\mathrm{SUM}}
\newcommand{\tmean}{\mathrm{TSUM}}
\newcommand{\nfail}{F}

\rebuttal{
\textbf{Attack Model. }
Mini-batch SGD is a common method for training deep neural networks. Despite its strong convergence properties in the standard settings, it is well known that even a small fraction of corrupted gradients can lead SGD to an arbitrarily poor solution \cite{ben2000robust, bertsimas2011theory}. 
An important attack model is called Byzantine contamination framework or \emph{Byzantine failure} \cite{lamport1982byzantine}. 
Consider an optimization problem with $n$ stochastic gradient oracles; at each iteration there are up to $\nfail$ gradient oracles are corrupted (usually referred as \emph{Byzantine agents} in the context of distributed SGD). The identity of corrupted oracles is a priori unknown. 
As the corrupted gradients can be arbitrarily skewed, this attack model is able to capture many important 
scenarios including \textbf{corruption in feature} (e.g., existence of outliers), \textbf{corruption in gradients} (e.g., hardware failure, unreliable communication channels during distributed training) and \textbf{corruption in labels} (e.g. label flip (backdoor) attacks). 
}

\rebuttal{
A popular class of defense strategies against Byzantine failure is to use robust gradient aggregation rules such as trimmed-mean \citep{gupta2021byzantine}, coordinate-wise median \citep{yin2018byzantine}, and geometric median \citep{acharya2021robust}.}
Since both privacy and robustness are essential for ML applications, it is important to develop learning algorithms that achieve differential privacy and robustness simultaneously. 

\textbf{Baseline: privatizing robust SGD with Gaussian Mechanism.}
A natural (but naive) way to accomplish such a goal for SGD-based algorithms is to privatize each iteration of robust SGD through Gaussian mechanism. Specifically, in each iteration, we add Gaussian noise to the aggregated gradient, where noise magnitude scale with the global sensitivity of the robust aggregation function. The privacy parameter for the entire process is computed through Moment Accountant. The global sensitivity can be obtained by either upper bounding the Lipschitz constant of the objective function, or clipping each gradient according to a certain threshold. However, this method could potentially add over-conservative noise to the aggregated gradients, as the local sensitivity of robust aggregation functions is usually very low for most input gradient sets. 

\textbf{Privatizing robust SGD with PTR.}
We improve the naive method with the PTR framework, where we instantiate it by showing how to privatize trimmed-mean SGD method from \cite{gupta2021byzantine}. 
We first define trimmed-sum\footnote{We use trimmed-sum instead of trimmed-mean for the ease of analysis, as trimmed-mean essentially just scales the learning rate.} aggregation function, and show that the $\Delta$ for trimmed-sum could be computed efficiently. 
Given a dataset $S = \{x_1, \ldots, x_m\}$, let $x_{(k)}$ denote the $k$th smallest data point among $S$ in $\ell_2$ norm, i.e., $\norm{x_{(1)}} \le \norm{x_{(2)}} \le \ldots \le \norm{x_{(m)}}$. The \emph{$\nfail$-trimmed sum of $S$} is defined as $\tmean_\nfail(S) = \sum_{i=1}^{m-\nfail} x_{(i)}$ if $m > \nfail$, or simply 0 if $m \le \nfail$. Define $\localSen_{f}^{(r)}(S) = \sup_{S, \Tilde{S}: d(S, \Tilde{S})=r} \localSen_f(\Tilde{S})$, which is the maximum local sensitivity that can be achieved by adding/removing $r$ elements from $S$. Thus, $\localSen_{f}^{(0)}(S) = \localSen_{f}(S)$, and the safety margin $\Delta(S) = \min \{r: \localSen_{f}^{(r)}(S) > \tau\}$. Therefore, as long as we can efficiently compute $\localSen_{f}^{(r)}(S)$, we can efficiently compute $\Delta(S)$ in linear time by simply enumerate all $r$ from 0 to $m$ and terminate once $\localSen_{f}^{(r)}(S) > \tau$. In fact, $\localSen_{\tmean_{\nfail}}^{(r)}(S)$ can be computed in $O(1)$ time. 
\begin{theorem}
$\localSen_{\tmean_{\nfail}}^{(r)}(S) = \norm{x_{(m-F+1+r)}}$ if $r \le \nfail-1$, or $\globalSen_{\tmean_{\nfail}}$ if $r > \nfail-1$. 
\label{thm:localsenr}
\end{theorem}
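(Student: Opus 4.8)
The plan is to reduce the statement to a clean characterization of the \emph{single-step} local sensitivity of the trimmed sum, and then to solve the resulting one-dimensional optimization over distance-$r$ neighbors by a counting argument. Throughout, for a dataset $\tilde S$ of size $\tilde m$ I write $\norm{z_{(1)}} \le \cdots \le \norm{z_{(\tilde m)}}$ for its points sorted by norm.

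First I would establish that $\localSen_{\tmean_\nfail}(\tilde S) = \norm{z_{(\tilde m - \nfail + 1)}}$ whenever $\tilde m > \nfail$, i.e.\ the local sensitivity equals the norm of the smallest trimmed element (the $\nfail$-th largest point). This follows from a short case analysis on the two ways of forming an adjacent dataset. When adding a point $y$: if $y$ lands among the kept bottom $\tilde m + 1 - \nfail$ points, the sum changes by $y$, whose norm is at most $\norm{z_{(\tilde m - \nfail + 1)}}$; if $y$ is trimmed, the trimming boundary shifts up by one and the sum changes by exactly $z_{(\tilde m - \nfail + 1)}$. When removing a point, the sum changes by at most $\norm{z_{(\tilde m - \nfail)}}$ (either the removed kept point, or the point promoted across the boundary). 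Since trimmed points dominate kept points in norm, the maximum over both operations is $\norm{z_{(\tilde m - \nfail + 1)}}$, achieved by adding a point of large norm.

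With this in hand, $\localSen_{\tmean_\nfail}^{(r)}(S) = \max_{\tilde S:\, d(S,\tilde S)=r} \norm{z_{(\tilde m - \nfail + 1)}}$ becomes the problem of maximizing the norm of the $\nfail$-th largest point of a dataset reachable from $S$ in $r$ add/remove operations. For the lower bound I would exhibit the optimal move: when $r \le \nfail - 1$, add $r$ points of norm at least $\norm{x_{(m)}}$, so that the new top $\nfail$ consists of these $r$ points together with $x_{(m)}, \ldots, x_{(m-\nfail+r+1)}$, making the $\nfail$-th largest point exactly $x_{(m-\nfail+1+r)}$. When $r > \nfail - 1$ I instead add $\nfail$ points of maximal norm, so that every trimmed point attains the maximal norm and the local sensitivity reaches $\globalSen_{\tmean_\nfail}$.

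The matching upper bound is the heart of the argument and rests on a counting invariant. Set $v := \norm{x_{(m-\nfail+1+r)}}$ for the case $r \le \nfail - 1$. In $S$ the number of points with norm strictly greater than $v$ is at most $\nfail - 1 - r$; each add/remove operation changes this count by at most one, so any $\tilde S$ at distance $r$ has at most $\nfail - 1$ points of norm exceeding $v$, which forces its $\nfail$-th largest point to have norm at most $v$. This gives $\localSen_{\tmean_\nfail}^{(r)}(S) \le v$, matching the construction, and for $r > \nfail - 1$ the trivial bound $\localSen \le \globalSen$ closes the remaining case. The main obstacle I anticipate is bookkeeping rather than conceptual: pinning down the single-step case analysis exactly at the trimming boundary, and handling ties in the norm ordering so that the counting invariant is airtight.
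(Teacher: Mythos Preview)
Your proposal is correct and follows essentially the same approach as the paper: first compute the single-step local sensitivity of $\tmean_\nfail$ as $\norm{z_{(\tilde m-\nfail+1)}}$, then observe that adding $r$ maximal-norm points realizes the claimed value. The paper's proof is terser, simply asserting that this addition strategy is ``trivially'' optimal; your counting-invariant argument for the upper bound is a more explicit and careful version of that same claim, not a genuinely different route.
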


Back to the goal of designing differentially private and robust SGD. Fix a positive integer $\nfail$ which is a potential upper bound for the number of corrupted gradients. 
We instantiate a PTR-based gradient aggregation algorithm by instantiating $f_1(\cdot)$ as $\mean(S) = \sum_{x \in S}x$, and $f_2(\cdot)$ as $\tmean_{\nfail}(\cdot)$. 
We simply plugging in this PTR algorithm as the gradient aggregation function for regular SGD. 
We show the convergence guarantee of this algorithm in the presence of at most $\nfail$ gradients being corrupted for the case that the loss function is Lipschitz. 
\begin{theorem}[Informal]
\label{thm:convergence}
When the loss function is Lipschitz in model parameters, SGD with PTR instantiated by $\mean$ and $\tmean$ provides convergence guarantee for update to $\nfail$ gradients being corrupted arbitrarily for any $\nfail<m/2$. 
\end{theorem}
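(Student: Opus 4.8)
The plan is to reduce the convergence claim to a standard analysis of SGD driven by a biased, bounded-variance gradient oracle, where the bias comes from the corrupted gradients that survive trimming and the variance comes from the Gaussian noise that PTR injects. Concretely, I would establish three ingredients: (i) a deterministic robustness bound showing that $\tmean_\nfail$ is a good surrogate for the true (honest) gradient direction; (ii) control on the additive PTR noise on both branches of Algorithm~\ref{alg:ptr-general}; and (iii) a descent lemma that assembles (i) and (ii) into a convergence guarantee.

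First I would prove the robustness of $\tmean_\nfail$. Let $B$ be the unknown set of at most $\nfail$ corrupted oracles and $H = [m]\setminus B$ the honest ones, so $|H| \ge m-\nfail > m/2$ by the assumption $\nfail < m/2$. The aggregation discards the $\nfail$ largest-norm gradients; the key observation is that any corrupted gradient surviving the trim must have norm no larger than some honest gradient that \emph{was} trimmed, since there are at most $\nfail$ corrupted gradients and exactly $\nfail$ are removed. Charging each surviving corrupted term against a discarded honest term of comparable norm, I would bound $\norm{\tmean_\nfail(S) - \sum_{i \in H'} g_i}$ by $O(\nfail \cdot \max_{i \in H}\norm{g_i})$ for a subset $H'\subseteq H$ with $|H\setminus H'|\le \nfail$, and then use the Lipschitz assumption to bound $\max_{i \in H}\norm{g_i}$ by the Lipschitz constant $L$. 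This is precisely the comparative-gradient-elimination resilience underlying \cite{gupta2021byzantine}, and it is where $\nfail < m/2$ is essential: it guarantees the honest gradients dominate the aggregate so the descent signal survives. I expect this to be the main obstacle, since the bound must hold for arbitrarily placed adversarial gradients and must remain tight enough to survive composition with the optimization analysis.

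Next I would handle the PTR noise. On either branch, PTR returns the aggregate plus a zero-mean $\N(0,\sigma^2\iden_d)$, so in expectation the injected noise vanishes and it contributes only a variance term of order $d\sigma^2$ to the second moment of the gradient estimate. On the robust (test-pass) branch the noise scale $\sigma_2$ is tied to $\tau$ and hence, via Theorem~\ref{thm:localsenr}, to the $\nfail$-th largest gradient norm, which is bounded under the Lipschitz assumption; on the failure branch the noise scales with $\globalSen_{\mean}$, also bounded by $L$ up to clipping. Thus the returned gradient can be written as $g_t = \g \mathcal{L}(\theta_t) + b_t + \xi_t$, where $\norm{b_t} = O(\nfail L)$ is the deterministic bias from the previous step and $\xi_t$ is zero-mean with $\E\norm{\xi_t}^2 = O(d\sigma^2)$.

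Finally I would invoke the standard descent argument. Using the Lipschitz (and, where a quantitative rate is wanted, smoothness) assumption on the loss, the per-step objective decrease is lower-bounded by the correlation of the true gradient with $g_t$ minus penalties from the bias and the variance; summing over $T$ iterations and tuning the step size yields convergence of $\min_t \norm{\g\mathcal{L}(\theta_t)}$ (or of the average iterate in the convex case) to a neighborhood whose radius is governed by the bias floor $O(\nfail L)$ and the noise floor $O(\sqrt{d}\,\sigma)$. Since both are bounded uniformly in $t$, this establishes the claimed convergence guarantee for any $\nfail < m/2$, matching the robustness regime of trimmed aggregation.
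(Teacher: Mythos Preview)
Your proposal is correct in outline but follows a genuinely different route from the paper's proof. You reduce to a \emph{biased-oracle} SGD analysis: decompose the PTR output as $\nabla\mathcal{L}(\theta_t)+b_t+\xi_t$, bound $\norm{b_t}=O(\nfail L)$ via the charging argument and $\E\norm{\xi_t}^2=O(d\sigma^2)$, then invoke a descent lemma to get gradient-norm convergence to a bias/noise floor. The paper instead works in the \emph{strongly convex} setting (its formal version assumes $\alpha$-strong convexity and $\beta$-smoothness in addition to Lipschitzness) and directly analyzes the distance recursion $\norm{w^{(t+1)}-w^*}^2=\norm{w^{(t)}-w^*}^2-2\eta\inner{w^{(t)}-w^*}{\tilde g^{(t)}}+\eta^2\norm{\tilde g^{(t)}}^2$. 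It never isolates a bias term; rather it lower-bounds the inner product using that at least $m-2\nfail$ honest gradients survive trimming (this is where $\nfail<m/2$ enters) together with strong convexity, upper-bounds $\norm{\tilde g^{(t)}}^2$ by comparing trimmed norms against the full honest set, and obtains a contraction $\E\norm{w^{(t+1)}-w^*}^2\le\rho\norm{w^{(t)}-w^*}^2+M$ with explicit $\rho<1$ and $M$. It also treats the two PTR branches (Routine~A: $\mean$, Routine~B: $\tmean$) with \emph{different} learning rates $\eta_A=\tfrac{m-\nfail}{m}\eta_B$ chosen so that the Routine~A constants are dominated by Routine~B's, which your sketch does not address. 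Your approach is more modular and yields a nonconvex-style guarantee without strong convexity; the paper's buys explicit contraction constants and a clean $\frac{M_B}{1-\rho_B}$ limit, at the price of the extra convexity assumption.

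One point to tighten in your sketch: when you write $b_t$ as a ``deterministic'' bias, note that which honest gradients land in $H'$ depends on the random gradients themselves, so $b_t$ is random; what you actually have (and what suffices) is an almost-sure bound $\norm{b_t}\le C\nfail L$. The paper sidesteps this by never conditioning on the trimmed set and instead using the cardinality bound $|\tilMH|\ge m-2\nfail$ directly inside the inner product.
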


We make two remarks of our specific design choices for implementing PTR-based SGD in practice.   
\begin{remark}
\textbf{(1)}
In practice, $\nfail$ could be adjusted dynamically during the model training, based on the value of $\widehat \Delta$\footnote{\rebuttal{This does not introduce extra privacy leakage as DP is closed under post-processing.}}. That is, if $\widehat \Delta \le B$, it means that the current gradient batch has many outliers, thus we increase $\nfail$ for the next iteration; if $\widehat \Delta > B$, it means that gradients are relatively concentrated and few of them are corrupted, thus we can decrease $\nfail$ afterwards. 
\textbf{(2)} We set $f_1$ as $\mean$ instead of $\tmean$ since $\nfail$ is usually set to be larger than the actual corrupted gradients; in this case, $\widehat \Delta \le B$ means that there are many benign gradients also have large norms. These benign gradients usually correspond to the data points that are misclassified by the partially trained models, which is important for improving model performance. 
The pseudocode for the algorithm is deferred to the Appendix \ref{appendix:proof}.
\end{remark}

\begin{table}[t]
\centering
\resizebox{\columnwidth}{!}{
\begin{tabular}{@{}ccccccc@{}}
\toprule
\multirow{2}{*}{\textbf{Dataset}} & \multirow{2}{*}{\textbf{\begin{tabular}[c]{@{}c@{}}Corruption\\ Type\end{tabular}}} & \multirow{2}{*}{\textbf{CR}} & \multicolumn{2}{c}{\textbf{$\eps = 3.0$}}                       & \multicolumn{2}{c}{\textbf{$\eps=5.0$}}                         \\
                                  &                                                                                     &                              & \textbf{TSGD+Gaussian} & \textbf{TSGD+PTR}                      & \textbf{TSGD+Gaussian} & \textbf{TSGD+PTR}                      \\ \midrule
\multicolumn{1}{l}{}              & \multicolumn{1}{l}{}                                                                & \textbf{0}                   & 87.53\%                & 91.43\% $\textcolor{red}{(+3.9\%)}$    & 81.31\%                & 92.8\% $\textcolor{red}{(+11.49\%)}$   \\
\textbf{MNIST}                    & \textbf{Label}                                                                      & \textbf{0.1}                 & 90.15\%                & 93.29\% $\textcolor{red}{(+3.138\%)}$  & 89.15\%                & 94.608\% $\textcolor{red}{(+5.456\%)}$ \\
\textbf{}                         & \textbf{}                                                                           & \textbf{0.2}                 & 91.33\%                & 92.704\% $\textcolor{red}{(+1.374\%)}$ & 92.47\%                & 94.084\% $\textcolor{red}{(+1.614\%)}$ \\
\textbf{}                         & \multirow{2}{*}{\textbf{Feature}}                                                   & \textbf{0.1}                 & 89.47\%                & 92.28\% $\textcolor{red}{(+2.812\%)}$  & 86.53\%                & 93.388\% $\textcolor{red}{(+6.854\%)}$ \\
\textbf{}                         &                                                                                     & \textbf{0.2}                 & 91.71\%                & 92.29\% $\textcolor{red}{(+0.582\%)}$  & 86.99\%                & 92.864\% $\textcolor{red}{(+5.872\%)}$ \\
\textbf{}                         & \multirow{2}{*}{\textbf{Gradient}}                                                  & \textbf{0.1}                 & 90.27\%                & 91.7\% $\textcolor{red}{(+1.43\%)}$    & 88.41\%                & 91.68\% $\textcolor{red}{(+3.27\%)}$   \\
                                  &                                                                                     & \textbf{0.2}                 & 91.13\%                & 91.45\% $\textcolor{red}{(+0.32\%)}$   & 88.70\%                & 90.31\% $\textcolor{red}{(+1.61\%)}$   \\ \midrule
\multirow{2}{*}{}                 & \multirow{2}{*}{}                                                                   & \multirow{2}{*}{}            & \multicolumn{2}{c}{\textbf{$\eps = 7.0$}}                       & \multicolumn{2}{c}{\textbf{$\eps=10.0$}}                        \\
                                  &                                                                                     &                              & \textbf{TSGD+Gaussian} & \textbf{TSGD+PTR}                      & \textbf{TSGD+Gaussian} & \textbf{TSGD+PTR}                      \\ \midrule
\multicolumn{1}{l}{}              & \multicolumn{1}{l}{}                                                                & \textbf{0}                   & 57.94\%                & 58.92\% $\textcolor{red}{(+0.98\%)}$   & 58.82\%                & 60.13\% $\textcolor{red}{(+1.31\%)}$   \\
\textbf{CIFAR10}                  & \textbf{Label}                                                                      & \textbf{0.1}                 & 55.48\%                & 56.262\% $\textcolor{red}{(+0.78\%)}$  & 57.42\%                & 58.218\% $\textcolor{red}{(+0.8\%)}$   \\
\textbf{}                         & \textbf{}                                                                           & \textbf{0.2}                 & 49.69\%                & 50.946\% $\textcolor{red}{(+1.26\%)}$  & 52.95\%                & 53.106\% $\textcolor{red}{(+0.152\%)}$ \\
\textbf{}                         & \multirow{2}{*}{\textbf{Feature}}                                                   & \textbf{0.1}                 & 55.70\%                & 56.628\% $\textcolor{red}{(+0.93\%)}$  & 57.36\%                & 58.678\% $\textcolor{red}{(+1.318\%)}$ \\
\textbf{}                         &                                                                                     & \textbf{0.2}                 & 55.13\%                & 55.526\% $\textcolor{red}{(+0.398\%)}$ & 56.93\%                & 57.336\% $\textcolor{red}{(+0.404\%)}$ \\
\textbf{}                         & \multirow{2}{*}{\textbf{Gradient}}                                                  & \textbf{0.1}                 & 55.81\%                & 57.29\% $\textcolor{red}{(+1.48\%)}$   & 57.21\%                & 58.96\% $\textcolor{red}{(+1.75\%)}$   \\
                                  &                                                                                     & \textbf{0.2}                 & 54.45\%                & 56.31\% $\textcolor{red}{(+1.86\%)}$   & 56.11\%                & 57.48\% $\textcolor{red}{(+1.37\%)}$   \\ \midrule
\multirow{2}{*}{}                 & \multirow{2}{*}{}                                                                   & \multirow{2}{*}{}            & \multicolumn{2}{c}{\textbf{$\eps = 2.0$}}                       & \multicolumn{2}{c}{\textbf{$\eps=2.5$}}                         \\
                                  &                                                                                     &                              & \textbf{TSGD+Gaussian} & \textbf{TSGD+PTR}                      & \textbf{TSGD+Gaussian} & \textbf{TSGD+PTR}                      \\ \midrule
\multicolumn{1}{l}{}              & \multicolumn{1}{l}{}                                                                & \textbf{0}                   & 72.94\%                & 76.44\% $\textcolor{red}{(+3.5\%)}$    & 79.15\%                & 81.02\% $\textcolor{red}{(+1.87\%)}$   \\
\textbf{EMNIST}                   & \textbf{Label}                                                                      & \textbf{0.1}                 & 72.60\%                & 75.66\% $\textcolor{red}{(+3.06\%)}$   & 79.38\%                & 80.63\% $\textcolor{red}{(+1.25\%)}$   \\
\textbf{}                         & \textbf{}                                                                           & \textbf{0.2}                 & 70.03\%                & 72.62\% $\textcolor{red}{(+2.59\%)}$   & 77.48\%                & 79.19\% $\textcolor{red}{(+1.71\%)}$   \\
\textbf{}                         & \multirow{2}{*}{\textbf{Feature}}                                                   & \textbf{0.1}                 & 69.60\%                & 74.01\% $\textcolor{red}{(+4.41\%)}$   & 77.80\%                & 81.04\% $\textcolor{red}{(+3.24\%)}$   \\
\textbf{}                         &                                                                                     & \textbf{0.2}                 & 70.04\%                & 74.76\% $\textcolor{red}{(+4.72\%)}$   & 78.99\%                & 80.74\% $\textcolor{red}{(+1.75\%)}$   \\
\textbf{}                         & \multirow{2}{*}{\textbf{Gradient}}                                                  & \textbf{0.1}                 & 71.75\%                & 76.19\% $\textcolor{red}{(+4.44\%)}$   & 77.32\%                & 77.73\% $\textcolor{red}{(+0.41\%)}$   \\
                                  &                                                                                     & \textbf{0.2}                 & 70.76\%                & 74.65\% $\textcolor{red}{(+3.89\%)}$   & 76.68\%                & 77.17\% $\textcolor{red}{(+0.49\%)}$   \\ \bottomrule
\end{tabular}
}
\caption{Model Accuracy under different privacy budgets and corruption settings. Every statistic is averaged over 5 runs with different random seed. 
The improvement of $\tsgd+\ptr$ over $\tsgd+\gau$ is highlighted in the red text. $\eps$s are chosen differently for different datasets since the best accuracy-privacy tradeoff point varied for datasets.
\rebuttal{`CR' means corruption ratio.}
} 
\label{tb:full}
\end{table}

\subsection{Evaluation}
\label{sec:eval}

In this section, we empirically evaluate the performance of PTR-based private and robust SGD. 
Specifically, we compare the utility-privacy tradeoff between the algorithm that privatize trimmed mean-based SGD with PTR ($\tsgd+\ptr$), and the baseline algorithm that privatize trimmed mean-based SGD with Gaussian mechanism ($\tsgd+\gau$). 

\textbf{Experiment Settings. }
We evaluate the utility-privacy tradeoff of the two robust and differentially private SGD algorithms under three common types of attack: label, feature, and communicated gradient corruption. 
For label corruption, we randomly flip of label of certain amount of data points. 
For feature corruption, we add Gaussian noise from $\N(0, 100)$ directly to the corrupted images. 
For gradient corruption, we add Gaussian noise from $\N(0, 100)$ to the true gradients. 
We experiment on three classic datasets, MNIST, CIFAR10, and EMNIST. We vary different corruption ratios and privacy parameter $\eps$. Experiment details are deferred to Appendix.

\textbf{Results. }
The comparison of model test accuracy under given corruption settings and privacy parameters are summarized in Table \ref{tb:full}. We highlight the improvement of $\tsgd+\ptr$ over the baseline $\tsgd+\gau$ in red texts. As we can see, for all settings, $\tsgd+\ptr$ outperforms and often works significantly better than $\tsgd+\gau$. This demonstrates that, while $\tsgd+\ptr$ may introduce extra privacy loss in the \textbf{test} step, the performance gain from adding smaller noise in the \textbf{release} step overshadows it. 

We also observe two interesting phenomena in the experiment: \textbf{(1)} a higher corruption ratio may not necessarily lead to worse model performance for trimmed mean-based robust SGD, especially for MNIST dataset. This is because the high norm gradients can be either corrupted, or benign but the partially-trained model misclassifies the corresponding data points. The latter case is extremely important for improving model performance compared with the gradients of data points are already being classified correctly. When the corruption ratio is small, more benign gradients are being trimmed, which may lead to worse model performance. \textbf{(2)} For $\tsgd+\gau$, a larger privacy budget may not lead to better model performance on MNIST dataset. This is because when the training accuracy reaches the peak, there are many trimmed gradients whose corresponding training data points are already correctly classified; continuing training without those data points and with large noise may result in catastrophic forgetting.

\section{Conclusion and Limitation}
\label{sec:conclusion}

This work derives the \Renyi Differential Privacy for propose-test-release framework as well as its subsampled version. 
With the RDP bound for the PTR framework, this work demonstrate the application of PTR in training differentially private and robust models. \add{
One limitation of PTR is that it does not work well in privatizing coordinate-wise median in high-dimensional space. The global sensitivity of coordinate-wise median is far greater than the one for mean, which results in huge privacy loss.} 

\section*{Acknowledgement}
We are grateful to anonymous reviewers at NeurIPS for
valuable feedback. This work was supported in part by
the National Science Foundation under grants CNS2131910, CNS-1553437, CNS-1553301, CNS-1704105,
and CNS-1953786, the ARL’s Army Artificial Intelligence Innovation Institute (A2I2), the Office of Naval Research Young Investigator Award, the Army Research Office Young Investigator Prize, Schmidt DataX award, Princeton E-ffiliates Award, Amazon-Virginia Tech Initiative in Efficient and Robust Machine Learning, and Princeton's Gordon Y. S. Wu Fellowship.

\newpage

\bibliography{ref.bib}
\bibliographystyle{alpha}

\newpage

\appendix

\section{Proofs, Additional Theoretical Results \& Discussion}
\label{appendix:proof}

\subsection{$(\eps, \delta)$-DP and R\'enyi DP for Propose-Test-Release}

We consider two adjacent datasets $S, S'$ where $S' = S \cup \{x\}$. We denote the threshold $B = \log(1/(2\delta_0))b$. Note that we have $\Pr[\lap(0, b) > B] = \delta_0$. The output of Algorithm \ref{alg:ptr-general} on dataset $S$ is a sample from a joint distribution $(\widehat \Delta, \M)(S)$ where $\widehat \Delta(S) = \lap(\Delta(S), b)$ and $\M(S)|_{\widehat \Delta} = \N(f_1(S), \sigma_1^2) \Ind[\widehat \Delta \le B] + \N(f_2(S), \sigma_2^2) \Ind[\widehat \Delta > B]$. 

\begin{customthm}{\ref{thm:dp-for-ptr-maintext}}[restated]
Suppose $\globalSen_{f_1} = \globalSen_{f_2} = 1$ and $\sigma_1 = \sigma_2 / \tau$, then Algorithm \ref{alg:ptr-general} is $(\epslap^{(b)} + \epsgau^{(\sigma_1)}(\delta), \delta_0+\delta)$-DP.  
\end{customthm}
\begin{proof}
We consider two cases for $\localSen_{f_2}(S)$ and $\localSen_{f_2}(S')$.

\textbf{Case 1: both $\localSen_{f_2}(S)$ and $\localSen_{f_2}(S')$ are greater than $\tau$. }
In this case, we have $\Delta(S) = \Delta(S') = 0$ (recall that $\Delta$ refers to the minimum amount of data addition/removal to make the local sensitivity $> \tau$). Therefore, there are no privacy loss in $\widehat \Delta$. Besides, the probability that PTR releases $f_2(S) + \N(0, \sigma_2^2)$ is at most $\Pr[\widehat\Delta > B] = \Pr[\lap(0, b) > B] = \delta_0$. Therefore, with probability at least $1-\delta_0$, the PTR is $(\eps_\N^{(\sigma_1)}(\delta), \delta)$-DP, and overall it is $(\eps_\N^{(\sigma_1)}(\delta), \delta+\delta_0)$-DP. 

\textbf{Case 2: at least one of $\localSen_{f_2}(S)$ and $\localSen_{f_2}(S')$ are smaller than $\tau$.}
In this case, we know that $\Pr[ \M(S) \in T ] \le e^{\eps_\N^{(\sigma_1)}(\delta)}\Pr[ \M(S') \in T ] + \delta$ regardless of the value of $\widehat \Delta$. Thus, by basic composition theorem, PTR in this case is $\left(\eps_\lap^{(b)} + \eps_\N^{(\sigma_1)}(\delta), \delta\right)$-DP. 

Therefore, PTR is $(\epslap^{(b)} + \epsgau^{(\sigma_1)}(\delta), \delta_0+\delta)$-DP overall. 
\end{proof}


\rebuttal{
\paragraph{Comparison between $(\eps, \delta)$-DP and RDP Analysis: A motivating example (expanded).}
Suppose we have two mechanisms $\M_1$ and $\M_2$ who are $(\eps_1, \delta_1)$-DP and $(\eps_2, \delta_2)$-DP, respectively. 
Consider a simple PTR-like mechanism $\M$ that randomly picks one of mechanisms $\M_1$ and $\M_2$ to run, each with probability $1-\delta_0$ and $\delta_0$\footnote{For the actual PTR, the $\delta_0$ is not fixed but depends on the input dataset.}. 
A straightforward $(\eps, \delta)$-DP analysis for $\M$ can be given as follows: for any possible event $T$,
\begin{align}
    \Pr[\M(S) \in T] 
    &= (1-\delta_0) \Pr[\M_1(S) \in T] + \delta_0 \Pr[\M_2(S) \in T] \\
    &\le (1-\delta_0) [ e^{\eps_1} \Pr[\M_1(S') \in T] + \delta_1] + \delta_0 [ e^{\eps_2} \Pr[\M_2(S') \in T] + \delta_2] \\
    &= e^{\eps_1}(1-\delta_0)\Pr[\M_1(S') \in T] + e^{\eps_2} \delta_0 \Pr[\M_2(S') \in T] + (1-\delta_0) \delta_1 + \delta_0 \delta_2 \\
    &\le e^{\max(\eps_1, \eps_2)} \Pr[\M(S) \in T] + (1-\delta_0) \delta_1 + \delta_0 \delta_2
\end{align}
That is, $\M$ is $(\max(\eps_1, \eps_2), (1-\delta_0) \delta_1 + \delta_0 \delta_2)$-DP. Without further information, this bound is the best we can do since it is tight when there exists event $T$ such that $\Pr[\M_1(S') \in T] = 0$ while $\Pr[\M_2(S') \in T] > 0$. 
Alternatively, if we know $\eps_2 \gg \eps_1$ we can also move the probability $\delta_0$ to the $\delta$ term and obtain $(\eps_1, \delta_0 + \delta_1)$ (which is the case for Theorem \ref{thm:dp-for-ptr-maintext}). 

However, if we know the RDP guarantee of $\M_1$ and $\M_2$ as
$E_\alpha( \M_1(S) \| \M_1(S') ) \le f_\alpha(\eps_1)$
and 
$E_\alpha( \M_2(S) \| \M_2(S') ) \le f_\alpha(\eps_2)$\footnote{Recall that $f_\alpha(\eps) = \exp((\alpha-1)\eps)$ where if $\M$ is $(\alpha, \eps)$-RDP then $E_\alpha(\M(S) \| \M(S')) \le f_\alpha(\eps)$.}, 
then $E_\alpha(\M(S) \| \M(S'))$ can be simply bounded as
\begin{align}
\E_{\M(S)} \left[ \left( \frac{\mu_{\M(S')}}{\mu_{\M(S)}} \right)^\alpha \right]
&= (1-\delta_0) \E_{\M_1(S)} \left[ \left( \frac{\mu_{\M_1(S')}}{\mu_{\M_1(S)}} \right)^\alpha \right] + \delta_0 \E_{\M_2(S)} \left[ \left( \frac{\mu_{\M_2(S')}}{\mu_{\M_2(S)}} \right)^\alpha \right] \\
&\le (1-\delta_0)f_\alpha(\eps_1) + \delta_0 f_\alpha(\eps_2)
\end{align}
Compared with $(\eps, \delta)$-DP analysis, there are no extra inequalities used in RDP analysis of $\M$ except for the RDP guarantee of $\M_1$ and $\M_2$. 
Thus, RDP is more favorable in for PTR's privacy analysis, 
especially when $\delta_0$ is close to the target $\delta$. 
}

\begin{customthm}{\ref{thm:rdp-for-ptr-maintext}}[restated]
Suppose $\globalSen_{f_1} = \globalSen_{f_2} = 1$ and $\sigma_1 = \sigma_2 / \tau$. Then for any $\alpha > 1$, Algorithm \ref{alg:ptr-general} is $(\alpha, \eps_{\ptr}(\alpha))$-RDP for 
\begin{align}
    \eps_{\ptr} (\alpha) \le \max \left( \rdpfunc^{-1}\left( (1-\delta_0) \rdpfunc \left( \epsRgau^{(\sigma_1)}(\alpha) \right)+ \delta_0 \rdpfunc \left( \epsRgau^{(\sigma_2)}(\alpha) \right) \right),
    \epsRgau^{(\sigma_1)}(\alpha) + \epsRlap^{(b)}(\alpha) \right) \nonumber
\end{align}
\end{customthm}
\full{
\begin{remark}
This result implies that, the privacy loss of PTR could be as small as the direct composition of Laplace and Gaussian mechanism. We can find the optimal $\delta_0$ by computing the $\delta_0$ s.t. 
\begin{align}
    \frac{1}{\alpha-1} \log \left( (1-\delta_0) \rdpfunc \left( \epsRgau^{(\sigma_1)}(\alpha) \right)+ \delta_0 \rdpfunc \left( \epsRgau^{(\sigma_2)}(\alpha) \right) \right)
    = \epsRgau^{(\sigma_1)}(\alpha) + \epsRlap^{(b)}(\alpha)
\end{align}
which leads to 
\begin{align}
    \delta_0 = 
    \frac{ \rdpfunc \left( \epsRgau^{(\sigma_1)}(\alpha) \right) \left(\rdpfunc \left( \epsRlap^{(b)}(\alpha) \right) - 1\right)}{\rdpfunc \left( \epsRgau^{(\sigma_2)}(\alpha)\right)-\rdpfunc \left( \epsRgau^{(\sigma_1)}(\alpha)\right)}
\end{align}
\end{remark}
}
\begin{proof}
We will denote the density of $(\widehat \Delta, \M)(S)$ as $\mu$ and that of $(\widehat \Delta, \M)(S')$ as $\mu'$. We will use $\mu(s, t)$ to denote the joint density on the pair of outputs $(s, t)$, where $s \sim \widehat \Delta(S)$ and $t \sim \M(S)|_{\widehat \Delta}$. Furthermore, when we write $\mu(s)$ it refers to the marginal density of $\mu$ on $s$, and $\mu(t|s)$ refers to the conditional density on $t$ given $s$. 

In order to bound RDP of PTR with order $\alpha$, it suffices to bound the moments $\E_{(s, t) \sim \mu} \left[ \left(\frac{\mu'(s, t)}{\mu(s, t)}\right)^\alpha \right]$ and $\E_{(s, t) \sim \mu'} \left[\left(\frac{\mu(s, t)}{\mu'(s, t)}\right)^\alpha \right]$ then take the bigger of the two bounds. 
For readability, we may abbreviate the two quantities as $\E_{\mu} \left[\left(\frac{\mu'}{\mu}\right)^\alpha \right]$ and $\E_{\mu'} \left[\left(\frac{\mu}{\mu'}\right)^\alpha \right]$. 
We do the following to decompose $\E_{(s, t) \sim \mu} \left[\left(\frac{\mu'(s, t)}{\mu(s, t)}\right)^\alpha \right]$:
\begin{align}
    &\E_{(s, t) \sim \mu} \left[\left(\frac{\mu'(s, t)}{\mu(s, t)}\right)^\alpha \right] \\
    &= \E_{(s, t) \sim \mu} \left[ \left(\frac{\mu'(s)\mu'(t|s)}{\mu(s)\mu(t|s)}\right)^\alpha \right] \\
    &= \E_{s \sim \mu} \left[
    \left( \frac{\mu'(s)}{\mu(s)} \right)^\alpha
    \E_{t \sim \mu|s} \left[
    \left(\frac{\mu'(t|s)}{\mu(t|s)}\right)^\alpha \right] \right] \\
    &= \E_{s \sim \mu} \left[
    \left( \frac{\mu'(s)}{\mu(s)} \right)^\alpha
    \left(
    \E_{t \sim \mu|s} \left[
    \left(\frac{\mu'(t|s)}{\mu(t|s)}\right)^\alpha  \right]\Ind[s \le B] +  \E_{t \sim \mu|s} \left[
    \left(\frac{\mu'(t|s)}{\mu(t|s)}\right)^\alpha \right]\Ind[s > B]
    \right)
    \right] \\
\end{align}

When $s \le B$, we know that 
\begin{align}
    \E_{t \sim \mu|s} \left[
    \left(\frac{\mu'(t|s)}{\mu(t|s)}\right)^\alpha  \right]
    &= 
    \E_{t \sim \N(f_1(S), \sigma_1^2\iden_d)} \left[ 
    \left(
    \frac{\N(t; f_1(S'), \sigma_1^2\iden_d)}{\N(t; f_1(S), \sigma_1^2\iden_d)}
    \right)^\alpha
    \right] \\
    &= \E_{t \sim \N(\bm{0}, \sigma_1^2\iden_d)} \left[ 
    \left(
    \frac{\N(t; f_1(S') - f_1(S), \sigma_1^2\iden_d)}{\N(t; \bm{0}, \sigma_1^2\iden_d)}
    \right)^\alpha
    \right] \label{eq:translation-inv} \\
    &= \E_{t \sim \N(0, \sigma_1^2)} \left[ 
    \left(
    \frac{\N(t; \norm{f_1(S') - f_1(S)}, \sigma_1^2)}{\N(t; 0, \sigma_1^2)}
    \right)^\alpha
    \right] \label{eq:rotation} \\
    &\le \E_{t \sim \N(0, \sigma_1^2 )} \left[ 
    \left(
    \frac{\N(t; 1, \sigma_1^2)}{\N(t; 0, \sigma_1^2)}
    \right)^\alpha
    \right] \label{eq:globalsen} \\
    &= \rdpfunc \left( \epsRgau^{(\sigma_1)}(\alpha) \right)
\end{align}
where (\ref{eq:translation-inv}) is due to the translation invariance of \Renyi divergence, (\ref{eq:rotation}) is due to the rotation trick, (\ref{eq:globalsen}) is because of $\norm{f_1(S') - f_1(S)} \le 1$. 

We now analyze the upper bound of $\E_{t \sim \mu|s} \left[\left(\frac{\mu'(t|s)}{\mu(t|s)}\right)^\alpha  \right]$ when $s > B$ by considering two separate cases: when both $\localSen_{f_2}(S)>\tau$ and $\localSen_{f_2}(S')>\tau$, and when there is at least one of $\localSen_{f_2}(S)$ and $\localSen_{f_2}(S')$ is greater than $\tau$. 

\textbf{Case 1: both $\localSen_{f_2}(S)$ and $\localSen_{f_2}(S')$ are greater than $\tau$. }
In this case, the only known upper bound of $\norm{f_2(S) - f_2(S')}$ is the global sensitivity $\globalSen_{f_2} = 1$. Therefore, we only have $\E_{t \sim \mu|s} \left[\left(\frac{\mu'(t|s)}{\mu(t|s)}\right)^\alpha  \right] \le \rdpfunc \left( \epsRgau^{(\sigma_2)}(\alpha) \right)$ when $s > B$. 
Therefore, in this case we have 
\begin{align}
    &\E_{t \sim \mu|s} \left[
    \left(\frac{\mu'(t|s)}{\mu(t|s)}\right)^\alpha  \right]\Ind[s \le B] +  \E_{t \sim \mu|s} \left[
    \left(\frac{\mu'(t|s)}{\mu(t|s)}\right)^\alpha \right]\Ind[s > B] \\
    &= \rdpfunc \left( \epsRgau^{(\sigma_1)}(\alpha) \right)\Ind[s \le B]+\rdpfunc \left( \epsRgau^{(\sigma_2)}(\alpha) \right)\Ind[s > B]
\end{align}

However, note that when both $\localSen_{f_2}(S)$ and $\localSen_{f_2}(S')$ is greater than $\tau$, we have $\Delta(S) = \Delta(S') = 0$, \rebuttal{which means that there is no privacy loss by releasing the result of $\widehat \Delta(S)$ or $\widehat \Delta(S’)$.} 
Therefore, we have $\mu(s) = \mu'(s) = \lap(s; 0, b)$, and thus 
\begin{align}
    &\E_{(s, t) \sim \mu} \left[\left(\frac{\mu'(s, t)}{\mu(s, t)}\right)^\alpha \right] \\
    &= 
    \E_{s \sim \mu} \left[ \rdpfunc \left( \epsRgau^{(\sigma_1)}(\alpha) \right)\Ind[s \le B]+\rdpfunc \left( \epsRgau^{(\sigma_2)}(\alpha) \right)\Ind[s > B] \right] \\
    &= \rdpfunc \left( \epsRgau^{(\sigma_1)}(\alpha) \right)\Pr[\lap(0, b) \le B]+\rdpfunc \left( \epsRgau^{(\sigma_2)}(\alpha) \right)\Pr[\lap(0, b) > B] \\
    &= (1-\delta_0) \rdpfunc \left( \epsRgau^{(\sigma_1)}(\alpha) \right)+ \delta_0 \rdpfunc \left( \epsRgau^{(\sigma_2)}(\alpha) \right)
\end{align}

\textbf{Case 2: at least one of $\localSen_{f_2}(S)$ and $\localSen_{f_2}(S')$ are smaller than $\tau$.} 
In this case, we know that we have $\norm{f_2(S) - f_2(S')} \le \tau$. Thus, when $s \ge B$, we have 
\begin{align}
    \E_{t \sim \mu|s} \left[
    \left(\frac{\mu'(t|s)}{\mu(t|s)}\right)^\alpha  \right]
    &= \E_{t \sim \N(0, \sigma_2^2)} \left[ 
    \left(
    \frac{\N(t; \norm{f_2(S') - f_2(S)}, \sigma_2^2)}{\N(t; 0, \sigma_2^2)}
    \right)^\alpha
    \right] \\
    &\le \E_{t \sim \N(0, \sigma_2^2 )} \left[ 
    \left(
    \frac{\N(t; \tau, \sigma_2^2)}{\N(t; 0, \sigma_2^2)}
    \right)^\alpha
    \right] \\
    &= \rdpfunc \left( \epsRgau^{(\sigma_2/\tau)}(\alpha) \right)
\end{align}
Thus, we have 
\begin{align}
    &\E_{(s, t) \sim \mu} \left[\left(\frac{\mu'(s, t)}{\mu(s, t)}\right)^\alpha \right] \\
    &\le 
    \E_{s \sim \mu} \left[ \left(\frac{\mu'(s)}{\mu(s)}\right)^\alpha
    \left( \rdpfunc \left( \epsRgau^{(\sigma_1)}(\alpha) \right)\Ind[s \le B]+\rdpfunc \left( \epsRgau^{(\sigma_2/\tau)}(\alpha) \right)\Ind[s > B] \right) \right] \\
    &= \E_{s \sim \mu} \left[ \left(\frac{\mu'(s)}{\mu(s)}\right)^\alpha
    \rdpfunc \left( \epsRgau^{(\sigma_1)}(\alpha) \right) \right] \label{eq:cond} \\
    &= \rdpfunc \left( \epsRgau^{(\sigma_1)}(\alpha) \right) \E_{s \sim \mu} \left[ \left(\frac{\mu'(s)}{\mu(s)}\right)^\alpha \right] \\
    &\le \rdpfunc \left( \epsRgau^{(\sigma_1)}(\alpha) \right) \rdpfunc( \epsRlap^{(b)}(\alpha) )
\end{align}
where (\ref{eq:cond}) is because by our condition, $\sigma_1 = \sigma_2/\tau$. 

Therefore, we have 
\begin{align}
    &D_\alpha( \mu' \| \mu ) \nonumber \\
    &\le \frac{1}{\alpha-1} \log \left(
    \max( (1-\delta_0) \rdpfunc \left( \epsRgau^{(\sigma_1)}(\alpha) \right)+ \delta_0 \rdpfunc \left( \epsRgau^{(\sigma_2)}(\alpha) \right) , \rdpfunc \left( \epsRgau^{(\sigma_1)}(\alpha) \right) \rdpfunc( \epsRlap^{(b)}(\alpha) ) )
    \right) \nonumber \\
    &= \max \left( \frac{1}{\alpha-1} \log \left( (1-\delta_0) \rdpfunc \left( \epsRgau^{(\sigma_1)}(\alpha) \right)+ \delta_0 \rdpfunc \left( \epsRgau^{(\sigma_2)}(\alpha) \right) \right), \epsRgau^{(\sigma_1)}(\alpha) + \epsRlap^{(b)}(\alpha) \right) \nonumber
\end{align}
Since we did not use any condition that depends on the fact that $S' = S \cup \{x\}$, we know that $D_\alpha( \mu \| \mu' )$ also has the exactly the same upper bound, which leads to the conclusion. 
\end{proof}


\newpage

\subsubsection{Discussion: can we improve privacy analysis by not releasing $\widehat \Delta$?}

One may wonder if we can further improve the privacy analysis of PTR by not releasing $\widehat \Delta$. 
However, 
releasing $\widehat \Delta$ is essential for the applications of PTR. The rationale behind PTR is to exploit the fact that, while a function’s global sensitivity may be large, its local sensitivity may be much smaller for most of the “common inputs”. Thus, such a mechanism will only be preferred over a regular output perturbation mechanism when the local sensitivity of data drawn from input data distribution rarely exceeds the threshold. Without knowing about $\widehat \Delta$, the user \textbf{cannot} know whether they are actually enjoying the benefits from PTR or simply wasting privacy budgets on private sensitivity tests. Furthermore, the user cannot adjust the hyperparameters or switch algorithms accordingly. Notably, in Section 5 (the application of PTR in privatizing robust SGD), we also use the information from $\widehat \Delta$ to dynamically adjust the number of gradients to be trimmed (note that this does not affect privacy analysis since the adjustment is post-processing of $\widehat \Delta$). 

Besides, we gave an attempt to directly analyze the variant of PTR that does not release $\hat \Delta$, and we do not see an easy way to obtain a better privacy bound than we have in Theorem 4.3.

We follow the same notations as in the proof of Theorem 4.3:
Given a pair of neighboring dataset $S, S'$, we denote the density of $\M(S)$ as $\mu$ and that of $\M(S')$ as $\mu'$. 
Given $s \sim \widehat \Delta(S)$, we denote $\mu(t|s \le B)$ the density of $\N(f_1(S), \sigma_1^2)$, and $\mu(t|s>B)$ the density of $\N(f_2(S), \sigma_2^2)$. $\mu'(t|s \le B)$ and $\mu'(t|s>B)$ are defined analogously. 

Similar to the proof of Theorem 4.3, we consider two separate cases: when both $\localSen_{f_2}(S)>\tau$ and $\localSen_{f_2}(S')>\tau$, and when there is at least one of $\localSen_{f_2}(S)$ and $\localSen_{f_2}(S')$ is greater than $\tau$. 

\textbf{Case 1. }
For the case that both $\localSen_{f_2}(S)$ and $\localSen_{f_2}(S')$ are greater than $\tau$, from the proof of Theorem 4.3 we know that $\widehat \Delta(S)$ and $\widehat \Delta(S')$ has exactly the same distribution since $\Delta(S)=\Delta(S')=0$. Thus, the exactly the same proof in Theorem 4.3 applies for the case of not releasing $\widehat \Delta$. 

\textbf{Case 2. }
For the case that at least one of $\localSen_{f_2}(S)$ and $\localSen_{f_2}(S')$ is smaller than $\tau$, here's our attempt:
\begin{align}
    E_\alpha(\M(S) \| \M(S')) 
    &= \E_{t \sim \mu} \left[ \left( \frac{\mu'(t)}{\mu(t)} \right)^\alpha \right] \\
    &= \E_{t \sim \mu} \left[ 
    \left( 
    \frac{ \mu'(t|s\le B)\Pr[\widehat \Delta(S') \le B] + \mu'(t|s>B)\Pr[\widehat \Delta(S') > B] }{ \mu(t|s\le B)\Pr[\widehat \Delta(S) \le B] + \mu(t|s>B)\Pr[\widehat \Delta(S) > B] } \right)^\alpha \right] \label{eq:expand}
\end{align}
As we can see, while the distribution of $\M(S)$ is a Gaussian mixture, the probability for different components is also depending on $S$, which introduce more challenge in bounding $E_\alpha(\M(S) \| \M(S'))$. 
One relatively simple way to bound the above expression is by noticing that since $\widehat{\Delta} = \Delta + \mathrm{Lap}(0, b)$, by the privacy guarantee of Laplace mechanism 
we have $\Pr[\widehat \Delta(S') \le B] \le e^{1/b} \Pr[\widehat \Delta(S) \le B]$ and $\Pr[\widehat \Delta(S') > B] \le e^{1/b} \Pr[\widehat \Delta(S) > B]$. Thus, we have
\begin{align}
   (\ref{eq:expand}) 
   &\le \exp \left(\frac{\alpha}{b}\right) \E_{t \sim \mu} \left[ 
    \left( 
    \frac{ \mu'(t|s\le B)\Pr[\widehat \Delta(S) \le B] + \mu'(t|s>B)\Pr[\widehat \Delta(S) > B] }{ \mu(t|s\le B)\Pr[\widehat \Delta(S) \le B] + \mu(t|s>B)\Pr[\widehat \Delta(S) > B] } \right)^\alpha \right] \\
    &\le \exp \left(\frac{\alpha}{b}\right) 
    \E_{t \sim \mu} \left[ 
    \left( 
    \frac{ \mu'(t|s\le B) }{ \mu(t|s\le B) } \right)^\alpha \right]
\end{align}
where the last inequality is due to the quasi-convexity of Renyi divergence \cite{van2014renyi} (note that $\E_{t \sim \mu} \left[ \left( \frac{ \mu'(t|s\le B) }{ \mu(t|s\le B) } \right)^\alpha \right] = \E_{t \sim \mu} \left[ \left( \frac{ \mu'(t|s > B) }{ \mu(t|s > B) } \right)^\alpha \right]$ by construction for this case). 
Thus, we have 
\begin{align}
    R_\alpha(\M(S) \| \M(S')) \le \epsRgau^{(\sigma_1)}(\alpha) + \frac{1}{\alpha-1} \left(\frac{\alpha}{b}\right)
\end{align}
for the case of at least one of $\localSen_{f_2}(S)$ and $\localSen_{f_2}(S')$ are smaller than $\tau$.

Now we show that this bound is not as good as the corresponding bound in Theorem 4.3. 
The corresponding bound in Theorem 4.3 for this case is $\epsRgau^{(\sigma_1)}(\alpha) + \epsRlap^{(b)}(\alpha)$, so we only need to show $\epsRlap^{(b)}(\alpha) < \frac{1}{\alpha-1} \left(\frac{\alpha}{b}\right)$.
\begin{align}
    \epsRlap^{(b)}(\alpha) 
    &= 
    \frac{1}{\alpha-1} \log \left( \frac{\alpha}{2\alpha-1}\exp \left(\frac{\alpha-1}{b}\right) + \frac{\alpha-1}{2\alpha-1} \exp \left(-\frac{\alpha}{b}\right)\right) \\
    &< \frac{1}{\alpha-1} \log \left( \frac{\alpha}{2\alpha-1}\exp \left(\frac{\alpha}{b}\right) + \frac{\alpha-1}{2\alpha-1} \exp \left(\frac{\alpha}{b}\right)\right) \\
    &= \frac{1}{\alpha-1} \log \left( \exp \left(\frac{\alpha}{b}\right) \right) \\
    &= \frac{1}{\alpha-1} \left(\frac{\alpha}{b}\right)
\end{align}
where the first inequality is due to $\exp \left(\frac{\alpha-1}{b}\right) < \exp \left(\frac{\alpha}{b}\right)$ and $\exp \left(-\frac{\alpha}{b}\right) < \exp \left(\frac{\alpha}{b}\right)$. 

Thus, we think at least there are no simple solution for improving the privacy bound for PTR by not releasing $\widehat \Delta$. 
However, even if there are a better way to derive the privacy bound, this variant of PTR may not be user-friendly as the counterpart who release $\widehat \Delta$.

\subsection{\Renyi DP for Subsampled Propose-Test-Release}

We consider two adjacent datasets $S, S'$ where $S' = S \cup \{x\}$. We denote the threshold $B = \log(1/(2\delta_0))b$. Note that we have $\Pr[\lap(0, b) > B] = \delta_0$. The output of Algorithm \ref{alg:ptr-general} on dataset $S$ is a sample from a joint distribution $(\widehat \Delta, \M)(S)$ where $\widehat \Delta(S) = \lap(\Delta(S), b)$ and $\M(S)|_{\widehat \Delta} = \N(f_1(S), \sigma_1^2) \Ind[\widehat \Delta \le B] + \N(f_2(S), \sigma_2^2) \Ind[\widehat \Delta > B]$. 

\begin{customthm}{\ref{thm:subsampled-ptr}}[full version]
\label{thm:subsampled-ptr-appendix}
Let $q$ be the subsampling probability. 
Suppose $\globalSen_{f_1} = \globalSen_{f_2} = 1$ and $\sigma_1 = \sigma_2 / \tau$. 
If $q \le \frac{\exp(-1/b)}{4+\exp(-1/b)}$ and $\sigma_1 \ge \sigma_2 \ge 4$, and $\alpha$ satisfy
$
1<\alpha \leq \frac{1}{2} \sigma^{2}_2 L-2 \ln \sigma_2, 
\alpha \leq \frac{\frac{1}{2} \sigma^{2}_2 L^{2}-\ln 5-2 \ln \sigma_2}{L+\ln (q' \alpha)+1 /\left(2 \sigma^{2}_2\right)},
$
where $L=\ln \left(1+\frac{1}{q'(\alpha-1)}\right)$ and $q' = \frac{q}{q+(1-q)\exp(-1/b)}$, we have 
\begin{align}
    \eps_{\ptr \circ \poisson}(\alpha) 
    \le \rdpfunc^{-1} \left( \max( \worst, \easy, \hard ) \right) \nonumber
\end{align}
where 
\begin{align}
    \worst &= 1 + 2q^2 \alpha (\alpha-1) \left( \frac{1-\delta_0}{\sigma_1^2} + \frac{\delta_0}{\sigma_2^2} \right) \\
    \easy &= \rlapin{\alpha} + \frac{2\alpha (\alpha-1)}{\sigma_1^2} \left[ \rlapin{\alpha} - 2(1-q)\rlapin{\alpha-1} + (1-q)^2 \rlapin{\alpha-2} \right] \\
    \hard &= \rlaptin{\alpha} + \frac{2\alpha (\alpha-1)}{\sigma_1^2} \left[ \rlaptin{\alpha} - 2(1-q)\rlaptin{\alpha+1} + (1-q)^2 \rlaptin{\alpha+2} \right]
\end{align}
with $\rlapin{\alpha} = \E_{s \sim \mu_0} \left[\left(\frac{\mu(s)}{\mu_0(s)}\right)^\alpha \right]$
and 
$\rlaptin{\alpha} = \E_{s \sim \mu} \left[\left(\frac{\mu_0(s)}{\mu(s)}\right)^\alpha \right]$
for $\mu_0 \sim \lap(0, b)$ and $\mu \sim (1-q) \lap(0, b) + q\lap(1, b)$. 
\end{customthm}

\begin{proof}
Let $T$ denote a set-valued random variable defined by taking a random subset of $S$, where each element of $S$ is independently placed in $T$ with probability $q$. Conditioned on $T$, the PTR outputs $(\widehat \Delta, \M)(T)$. Thus, 
\begin{align}
    \left( \widehat \Delta, \M \right)(S) 
    &= \sum_{T \subseteq S} p_T \cdot \left(\widehat \Delta, \M\right)(T) \\
    \left( \widehat \Delta, \M \right)(S') &= \sum_{T \subseteq S} p_T \cdot \left( (1-q) \cdot (\widehat \Delta, \M)(T) + q \cdot (\widehat \Delta, \M)(T\cup \{x\}) \right) 
\end{align}
where $p_T$ denotes the probabilty of sampling the subset $T$. 

\begin{align}
&D_\alpha \left(\left( \widehat \Delta, \M \right)(S') \Vert \left( \widehat \Delta, \M \right)(S)\right) \nonumber \\
&= D_\alpha \left(
\sum_{T} p_T \cdot \left( (1-q) \cdot (\widehat \Delta, \M)(T) + q \cdot (\widehat \Delta, \M)(T\cup \{x\}) \right)
\Vert 
\sum_T p_T \cdot \left(\widehat \Delta, \M\right)(T)
\right) \nonumber \\
&\le \sup_T 
D_\alpha \left(
(1-q) \cdot (\widehat \Delta, \M)(T) + q \cdot (\widehat \Delta, \M)(T\cup \{x\})
\Vert 
\left(\widehat \Delta, \M\right)(T)
\right) \nonumber
\end{align}
where the last step is due to the quasi-convexity of \Renyi divergence (\cite{van2014renyi}, Theorem 13). Symmetrically, we also have 
\begin{align}
    &D_\alpha \left(\left( \widehat \Delta, \M \right)(S') \Vert \left( \widehat \Delta, \M \right)(S)\right) \\
    &\le 
    \sup_T 
D_\alpha \left( 
\left(\widehat \Delta, \M\right)(T) \Vert
(1-q) \cdot (\widehat \Delta, \M)(T) + q \cdot (\widehat \Delta, \M)(T\cup \{x\})
\right)
\end{align}

Fix a subset $T$ and denote $T' = T \cup \{x\}$. 
We use $\mu_0$ to denote the density function of $(\widehat \Delta, \M)(T)$, where $\mu_0(s, t)$ refers to the density on $(s, t)$. 
We use $\mu_1$ to denote the density function of $(\widehat \Delta, \M)(T')$, where $\mu_1(s, t)$ refers to the density on $(s, t)$. 
Let $\mu = (1-q)\mu_0 + q\mu_1$. 
We want to bound 
$\E_{\mu_0} \left[ \left(\frac{\mu}{\mu_0} \right)^\alpha \right]$ and $\E_{\mu} \left[ \left(\frac{\mu_0}{\mu}\right)^\alpha \right]$. 

We first bound $\E_{\mu_0} \left[ \left(\frac{\mu}{\mu_0} \right)^\alpha \right]$, which is usually considered as an easier one. 

By decomposition, we have 
\begin{align}
    \E_{s, t \sim \mu_0} \left[ \left(\frac{\mu(s, t)}{\mu_0(s, t)}\right)^\alpha \right] 
    &= \E_{s, t \sim \mu_0} \left[ \left(\frac{\mu(s)\mu(t|s)}{\mu_0(s)\mu_0(t|s)}\right)^\alpha \right]  \\
    &= \E_{s \sim \mu_0} \left[
    \left(\frac{\mu(s)}{\mu_0(s)}\right)^\alpha
    \E_{t \sim \mu_0(\cdot|s)} \left[\left(\frac{\mu(t|s)}{\mu_0(t|s)}\right)^\alpha\right]
    \right] \label{eq:all}
\end{align}

For the density of conditional distribution $\mu(t|s)$, we have
\begin{align}
    \mu(t|s) &= \frac{\mu(s, t)}{\mu(s)} \\
    &= \frac{(1-q)\mu_0(s, t) + q\mu_1(s, t)}{\mu(s)} \\
    &= 
    I[s \le B] \cdot \frac{ (1-q) \mu_0(s) \cdot \N(t ; f_1(T), \sigma_1^2) + q\mu_1(s) \cdot \N(t; f_1(T'), \sigma_1^2) }{\mu(s)} \\
    &~~~~~~+  I[s > B] \cdot \frac{ (1-q) \mu_0(s) \cdot \N(t; f_2(T), \sigma_2^2) + q\mu_1(s) \cdot \N(t; f_2(T'), \sigma_2^2) }{\mu(s)}
\end{align}

Denote $A(s) = \frac{q\mu_1(s)}{\mu(s)}$. Recall that $\mu(s) = (1-q)\mu_0(s)+q\mu_1(s)$, so we have $\frac{(1-q)\mu_0(s)}{\mu(s)} = 1 - A$. 
Then we have  
\begin{align}
    \mu(t|s) &= I[s \le B] \left( (1-A) \cdot \N(t ; f_1(T), \sigma_1^2) + A \cdot \N(t ; f_1(T'), \sigma_1^2) \right) \\
    &~~~+ I[s > B] \left( (1-A) \cdot \N(t ; f_2(T), \sigma_2^2) + A \cdot \N(t ; f_2(T'), \sigma_2^2) \right)
\end{align}
and we know that 
\begin{align}
    \mu_0(t|s) = I[s \le B] \N(t ; f_1(T), \sigma_1^2) + 
    I[s > B] \N(t ; f_2(T), \sigma_2^2)
\end{align}
Therefore we have 
\begin{align}
    &\E_{t \sim \mu_0(\cdot|s)} \left[\left(\frac{\mu(t|s)}{\mu_0(t|s)}\right)^\alpha\right] \\
    &= \E_{t \sim \mu_0(\cdot|s)} \left[\left(\frac{\mu(t|s)}{\mu_0(t|s)}\right)^\alpha I[s \le B] + \left(\frac{\mu(t|s)}{\mu_0(t|s)}\right)^\alpha I[s > B]
    \right]\\
    &= \E_{t \sim \mu_0(\cdot|s)} \left[\left(\frac{\mu(t|s)}{\mu_0(t|s)}\right)^\alpha\right]I[s \le B] + \E_{t \sim \mu_0(\cdot|s)} \left[\left(\frac{\mu(t|s)}{\mu_0(t|s)}\right)^\alpha\right]I[s > B] \\
    &= I[s \le B] \E_{t \sim \N(f_1(T), \sigma_1^2)} \left[ \left( (1-A) + A \cdot \frac{\N(t ; f_1(T'), \sigma_1^2)}{\N(t ; f_1(T), \sigma_1^2)} \right)^\alpha \right] \\
    &~~~+ 
    I[s > B] \E_{t \sim \N(f_2(T), \sigma_2^2)} \left[ \left( (1-A) + A \cdot \frac{\N(t ; f_2(T'), \sigma_2^2)}{\N(t ; f_2(T), \sigma_2^2)} \right)^\alpha \right] 
\end{align}
Note that $\E_{t \sim \N(f_1(T), \sigma_1^2)} \left[ \left( (1-A) + A \cdot \frac{\N(t ; f_1(T'), \sigma_1^2)}{\N(t ; f_1(T), \sigma_1^2)} \right)^\alpha \right]$ can be exactly bounded by the \Renyi DP of subsampled RDP with sampling probability $A$. 
\begin{lemma}[\cite{mironov2019r}, Theorem 11]
If $q \leq \frac{1}{5}, \sigma \geq 4$, and $\alpha$ satisfy $1<\alpha \leq \frac{1}{2} \sigma^{2} L-2 \ln \sigma$, $\alpha \leq \frac{\frac{1}{2} \sigma^{2} L^{2}-\ln 5-2 \ln \sigma}{L+\ln (q \alpha)+1 /\left(2 \sigma^{2}\right)}$ where $L=\ln \left(1+\frac{1}{q(\alpha-1)}\right)$, then for any function $f$ with $\ell_{2}$-sensitivity $\tau$ satisfies 
\begin{align}
    \E_{t \sim \N(f(T), \sigma^2)} \left[ \left( (1-q) + q \cdot \frac{\N(t ; f(T'), \sigma^2)}{\N(t ; f_2(T), \sigma^2)} \right)^\alpha \right]
    \le 1 + 2q^2 \tau^2 \alpha (\alpha-1) / \sigma^2
\end{align}
\label{lemma:gaussian-subsample}
\end{lemma}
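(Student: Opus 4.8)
The plan is to resume the argument exactly where the excerpt stops, pushing the quasi-convexity reduction through \emph{both} directions of the R\'enyi divergence and splitting on whether both local sensitivities exceed $\tau$. For a fixed $T$ with $T'=T\cup\{x\}$ we must bound $\E_{\mu_0}[(\mu/\mu_0)^\alpha]$ and $\E_{\mu}[(\mu_0/\mu)^\alpha]$; the first has already been written as $\E_{s\sim\mu_0}[(\mu(s)/\mu_0(s))^\alpha\,\E_{t\sim\mu_0(\cdot\mid s)}[(\mu(t\mid s)/\mu_0(t\mid s))^\alpha]]$, whose inner factor is a subsampled-Gaussian moment with effective sampling probability $A(s)=q\mu_1(s)/\mu(s)$. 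The first task is to control $A(s)$ uniformly. Since $\mu_0,\mu_1$ are $\lap(\Delta(T),b)$ and $\lap(\Delta(T'),b)$ with $|\Delta(T)-\Delta(T')|\le\globalSen_\Delta=1$, the Laplace likelihood ratio satisfies $\mu_0(s)/\mu_1(s)\ge e^{-1/b}$, so $A(s)=\tfrac{q}{(1-q)\mu_0(s)/\mu_1(s)+q}\le\tfrac{q}{(1-q)e^{-1/b}+q}=q'$. The hypothesis $q\le e^{-1/b}/(4+e^{-1/b})$ is exactly equivalent to $q'\le\tfrac15$, which is what activates Lemma~\ref{lemma:gaussian-subsample} for every $s$ once $\alpha,\sigma_2$ obey the stated conditions (phrased with $q'$ and the smaller noise $\sigma_2$, the binding case as $\sigma_1\ge\sigma_2$).

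Next I would dispatch the two cases. In Case~1 ($\localSen_{f_2}(T),\localSen_{f_2}(T')>\tau$) we have $\Delta(T)=\Delta(T')=0$, so $\mu_0(s)=\mu_1(s)=\lap(s;0,b)$, forcing $\mu(s)=\mu_0(s)$, outer factor $1$, and constant $A(s)=q$. Applying Lemma~\ref{lemma:gaussian-subsample} with probability $q$, sensitivity $\le\globalSen_{f_1}=1$ and noise $\sigma_1$ on $\{s\le B\}$, and sensitivity $\le\globalSen_{f_2}=1$ and noise $\sigma_2$ on $\{s>B\}$, then integrating against $\Pr[s\le B]=1-\delta_0$ and $\Pr[s>B]=\delta_0$, produces exactly $\worst$; the reverse direction coincides because the $s$-marginals are identical. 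In Case~2 (at least one local sensitivity $\le\tau$, so $\norm{f_2(T)-f_2(T')}\le\tau$) I apply the lemma with effective probability $A(s)$ in each branch---sensitivity $1$, noise $\sigma_1$ on $\{s\le B\}$; sensitivity $\tau$, noise $\sigma_2$ on $\{s>B\}$---and the relation $\sigma_1=\sigma_2/\tau$ makes both branches yield the common inner bound $1+\tfrac{2\alpha(\alpha-1)}{\sigma_1^2}A(s)^2$, so that
\[
\E_{\mu_0}\!\left[\left(\tfrac{\mu}{\mu_0}\right)^{\alpha}\right]\le \E_{s\sim\mu_0}\!\left[\left(\tfrac{\mu(s)}{\mu_0(s)}\right)^{\alpha}\!\left(1+\tfrac{2\alpha(\alpha-1)}{\sigma_1^2}A(s)^2\right)\right].
\]

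The crucial algebraic step is to eliminate $A(s)$ through the identity $q\mu_1(s)=\mu(s)-(1-q)\mu_0(s)$: substituting $A(s)^2=q^2\mu_1(s)^2/\mu(s)^2$ and expanding the square converts $\E_{s\sim\mu_0}[(\mu(s)/\mu_0(s))^\alpha A(s)^2]$ into $\rlapin{\alpha}-2(1-q)\rlapin{\alpha-1}+(1-q)^2\rlapin{\alpha-2}$, which reproduces $\easy$ after bounding $\rlapin{\alpha}$ and this (nonnegative) combination by their values at the extremal offset $\Delta(T)=0,\Delta(T')=1$---legitimate by translation invariance of the Laplace family and monotonicity of these moments in the offset magnitude. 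The reverse direction uses the mirror decomposition $\E_\mu[(\mu_0/\mu)^\alpha]=\E_{s\sim\mu}[(\mu_0(s)/\mu(s))^\alpha\,\E_{t\sim\mu(\cdot\mid s)}[(\mu_0(t\mid s)/\mu(t\mid s))^\alpha]]$; the identical elimination, now shifting exponents upward, gives $\rlaptin{\alpha}-2(1-q)\rlaptin{\alpha+1}+(1-q)^2\rlaptin{\alpha+2}$ and hence $\hard$. Taking the maximum over the three expressions and applying $\rdpfunc^{-1}$ then yields $\eps_{\ptr\circ\poisson}(\alpha)\le\rdpfunc^{-1}(\max(\worst,\easy,\hard))$.

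The main obstacle is the reverse inner factor $\E_{t\sim\mu(\cdot\mid s)}[(\mu_0(t\mid s)/\mu(t\mid s))^\alpha]$, which is \emph{not} covered by Lemma~\ref{lemma:gaussian-subsample}: because PTR's output is not centrally symmetric I cannot invoke symmetry as in the pure subsampled-Gaussian analysis, so I must establish a companion bound $\le 1+\tfrac{2\alpha(\alpha-1)}{\sigma^2}A^2$ for the \emph{reverse} subsampled-Gaussian moment. A change of measure rewrites it as $\E_{t\sim\mu_0(\cdot\mid s)}[(1+A(r-1))^{-(\alpha-1)}]$ with $r$ the Gaussian likelihood ratio, whose second-order expansion has the same leading coefficient as the forward moment, so I expect the same constant to survive under the stated $\alpha,\sigma_2,q'$ conditions. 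The secondary difficulty, anticipated in the main text, is precisely that the conditional weight $A(s)$ depends on $s$---it is the posterior inclusion probability of $x$ given $\widehat\Delta=s$---so $\M\mid\widehat\Delta$ does not factor as $(1-q)\M_0\mid\widehat\Delta+q\M_1\mid\widehat\Delta$; the $\rlapin{}$ and $\rlaptin{}$ bookkeeping above is exactly the device that integrates this $s$-dependence out in closed form.
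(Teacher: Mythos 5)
Your proposal does not prove the statement it was assigned. The statement is the moment bound for the subsampled Gaussian mechanism itself (Theorem 11 of \cite{mironov2019r}), i.e., the inequality
\[
\E_{t \sim \N(f(T), \sigma^2)} \left[ \left( (1-q) + q \cdot \frac{\N(t; f(T'), \sigma^2)}{\N(t; f(T), \sigma^2)} \right)^{\alpha} \right] \le 1 + \frac{2 q^2 \tau^2 \alpha (\alpha-1)}{\sigma^2}
\]
under the stated conditions on $q$, $\sigma$, and $\alpha$. Your writeup never engages with this inequality: it explicitly treats it as a black box (``which is what activates Lemma~\ref{lemma:gaussian-subsample}'') and instead reconstructs the downstream argument for the subsampled-PTR theorem --- the posterior weight $A(s)$, its uniform bound by $q'$, the two local-sensitivity cases, and the elimination of $A(s)$ via $q\mu_1(s) = \mu(s) - (1-q)\mu_0(s)$ into the Laplace moments. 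That reconstruction tracks the paper's proof of Theorem \ref{thm:subsampled-ptr} closely (and your uniform bound $A(s) \le q/\left((1-q)e^{-1/b}+q\right) = q'$ is in fact cleaner than the expression printed in the paper's proof), but it is a proof of a different statement. A proof of the lemma itself must control the binomial expansion $\E\left[\left(1 + q(r-1)\right)^{\alpha}\right] = \sum_{k \ge 0} \binom{\alpha}{k} q^k \,\E\left[(r-1)^k\right]$ for the Gaussian likelihood ratio $r$: the $k=0,1$ terms contribute $1$, the $k=2$ term contributes $\binom{\alpha}{2} q^2 \left(e^{\tau^2/\sigma^2} - 1\right)$, and the hypotheses $q \le 1/5$, $\sigma \ge 4$, and the two upper bounds on $\alpha$ through $L$ exist precisely to show that the $k \ge 3$ tail is dominated, so the whole series fits under the stated constant $2$ in place of the leading $1/2$. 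None of this analysis appears in your proposal. (The paper itself does not reprove the lemma either --- it imports it by citation --- but a blind proof of this statement must supply that argument rather than consume it.)

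Even judged as a sketch of the theorem the lemma feeds into, your text has a gap at the one point where new analysis would be needed: the reverse moment. You correctly observe that Lemma~\ref{lemma:gaussian-subsample} does not cover $\E_{t\sim\mu(\cdot|s)}\left[\left(\mu_0(t|s)/\mu(t|s)\right)^{\alpha}\right]$, but your proposed remedy --- a change of measure plus a second-order expansion for which you only ``expect the same constant to survive'' --- is a conjecture, not a proof; the constant in these subsampled moments is exactly where naive expansions fail without the tail control described above. The paper avoids needing any reverse Gaussian lemma: it converts the reverse joint moment through the identity $\E_{\mu}\left[\left(\mu_0/\mu\right)^{\alpha}\right] = \E_{\mu_0}\left[\left(\mu/\mu_0\right)^{1-\alpha}\right]$ and formally substitutes $\alpha \leftarrow 1-\alpha$ in the forward bound, which is what shifts the Laplace moment indices upward ($\alpha, \alpha+1, \alpha+2$) in the third branch of the maximum. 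Likewise, your appeal to ``monotonicity of these moments in the offset magnitude'' to reduce to the extremal offset $\Delta(T') - \Delta(T) = 1$ is asserted without justification; the paper invokes only translation invariance of the Laplace family at this step.
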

Similar to the proof of Theorem \ref{thm:rdp-for-ptr-maintext}, we consider two cases:

\textbf{Case 1: both $\localSen_{f_2}(T)$ and $\localSen_{f_2}(T')$ are greater than $\tau$. }
In this case, the only known upper bound of $\norm{f_2(T) - f_2(T')}$ is the global sensitivity $\globalSen_{f_2} = 1$. 
Therefore, we only have $\E_{t \sim \mu|s} \left[\left(\frac{\mu'(t|s)}{\mu(t|s)}\right)^\alpha  \right] \le 1 + 2A^2 \alpha (\alpha-1) / \sigma_2^2$ when $s > B$. 
Therefore, in this case we have 
\begin{align}
    &\E_{t \sim \mu|s} \left[
    \left(\frac{\mu'(t|s)}{\mu(t|s)}\right)^\alpha  \right]\Ind[s \le B] +  \E_{t \sim \mu|s} \left[
    \left(\frac{\mu'(t|s)}{\mu(t|s)}\right)^\alpha \right]\Ind[s > B] \\
    &= (1 + 2A^2 \alpha (\alpha-1) / (\sigma_1)^2)
    \Ind[s \le B]+(1 + 2A^2 \alpha (\alpha-1) / (\sigma_2)^2)\Ind[s > B] \\
    &= 1 + 2A^2 \alpha (\alpha-1) \left( \frac{\Ind[s \le B]}{\sigma_1^2} + \frac{\Ind[s > B]}{\sigma_2^2} \right)
\end{align}

However, note that when both $\localSen_{f_2}(S)$ and $\localSen_{f_2}(S')$ is greater than $\tau$, we have $\Delta(S) = \Delta(S') = 0$. Therefore, we have $\mu(s) = \mu'(s) = \lap(s; 0, b)$, $A = q$, and thus
\begin{align}
    &\E_{(s, t) \sim \mu} \left[\left(\frac{\mu'(s, t)}{\mu(s, t)}\right)^\alpha \right] \\
    &= 
    \E_{s \sim \mu} \left[ 1 + 2q^2 \alpha (\alpha-1) \left( \frac{\Ind[s \le B]}{\sigma_1^2} + \frac{\Ind[s > B]}{\sigma_2^2} \right) \right] \\
    &= 1 + 2q^2 \alpha (\alpha-1) \left( \frac{1-\delta_0}{\sigma_1^2} + \frac{\delta_0}{\sigma_2^2} \right)
\end{align}

\textbf{Case 2: at least one of $\localSen_{f_2}(T)$ and $\localSen_{f_2}(T')$ are smaller than $\tau$.} 
In this case, we know that we have $\norm{f_2(T) - f_2(T')} \le \tau$. 
Since $A = \frac{q\mu_1(s)}{(1-q)\mu_0(s)+q\mu_1(s)} \le \frac{q}{(1-q)+q\exp(-1/b)}$ which satisfy the conditions in Lemma \ref{lemma:gaussian-subsample} by our assumption, 
when $s \ge B$, we have $\E_{t \sim \mu|s} \left[\left(\frac{\mu'(t|s)}{\mu(t|s)}\right)^\alpha  \right] \le 1 + 2A^2 \alpha (\alpha-1) / (\sigma_2/\tau)^2$. 
Thus we have 
\begin{align}
    &\E_{s, t \sim \mu_0} \left[ \left(\frac{\mu(s, t)}{\mu_0(s, t)}\right)^\alpha \right] \\
    &= \E_{s \sim \mu_0} \left[
    \left(\frac{\mu(s)}{\mu_0(s)}\right)^\alpha
    \E_{t \sim \mu_0(\cdot|s)} \left[\left(\frac{\mu(t|s)}{\mu_0(t|s)}\right)^\alpha\right]
    \right] \\
    &\le \E_{s \sim \mu_0} \left[
    \left(\frac{\mu(s)}{\mu_0(s)}\right)^\alpha
    \left(
    1 + 2A^2 \alpha (\alpha-1) \left( \frac{\Ind[s \le B]}{\sigma_1^2} + \frac{\Ind[s > B]}{(\sigma_2/\tau)^2} \right)
    \right)
    \right] \\
    &= \E_{s \sim \mu_0} \left[
    \left(\frac{\mu(s)}{\mu_0(s)}\right)^\alpha \right] 
    + 2\alpha(\alpha-1) 
    \E_{s \sim \mu_0} \left[
    \left(\frac{\mu(s)}{\mu_0(s)}\right)^\alpha A^2 \left( \frac{I[s \le B]}{\sigma_1^2} + \frac{I[s > B]}{(\sigma_2/\tau)^2} \right) \right] \label{eq:Asquare} \\
    &= \E_{s \sim \mu_0} \left[
    \left(\frac{\mu(s)}{\mu_0(s)}\right)^\alpha \right] 
    + \frac{ 2\alpha(\alpha-1) }{ \sigma_1^2 } 
    \E_{s \sim \mu_0} \left[
    \left(\frac{\mu(s)}{\mu_0(s)}\right)^\alpha A^2  \right]
\end{align}

Denote $\rlapin{\alpha} = \E_{s \sim \mu_0} \left[\left(\frac{\mu(s)}{\mu_0(s)}\right)^\alpha \right]$, which is the RDP of subsampled Laplace mechanism with sampling rate $q$ (note that $\mu(s) = (1-q)\mu_0(s)+q\mu_1(s)$). 

Since 
\begin{align}
    A &= \frac{q\mu_1(s)}{\mu(s)} = 1 - \frac{(1-q)\mu_0(s)}{\mu(s)} \\
    A^2 &= 1 - \frac{2(1-q)\mu_0(s)}{\mu(s)} + \frac{(1-q)^2\mu_0^2(s)}{\mu^2(s)}
\end{align}
Plug this back to the second term, we have 
\begin{align}
    &\E_{s \sim \mu_0} \left[
    \left(\frac{\mu(s)}{\mu_0(s)}\right)^\alpha A^2  \right] \\
    &= \E_{s \sim \mu_0} \left[
    \left(\frac{\mu(s)}{\mu_0(s)}\right)^\alpha \left(1 - \frac{2(1-q)\mu_0(s)}{\mu(s)} + \frac{(1-q)^2\mu_0^2(s)}{\mu^2(s)}\right) \right] \\
    &= \E_{s \sim \mu_0} \left[
    \left(\frac{\mu(s)}{\mu_0(s)}\right)^\alpha \right] -2(1-q) \E_{s \sim \mu_0} \left[
    \left(\frac{\mu(s)}{\mu_0(s)}\right)^{\alpha-1} \right] +(1-q)^2 \E_{s \sim \mu_0} \left[
    \left(\frac{\mu(s)}{\mu_0(s)}\right)^{\alpha-2} \right]  \\
    &= \rlapin{\alpha} - 2(1-q)\rlapin{\alpha-1} + (1-q)^2 \rlapin{\alpha-2}
\end{align}
Note that this bound is independent on $T$ due to translation invariance, hence it is an upper bound for arbitrary $T$ (that satisfy case 2). Thus, the overall bound becomes
\begin{align}
    \E_{s, t \sim \mu_0} \left[ \left(\frac{\mu(s, t)}{\mu_0(s, t)}\right)^\alpha \right] 
    \le \rlapin{\alpha} + 
    \frac{2\alpha (\alpha-1)}{\sigma_1^2} \left[ \rlapin{\alpha} - 2(1-q)\rlapin{\alpha-1} + (1-q)^2 \rlapin{\alpha-2} \right] \label{eq:bound-easy}
\end{align}

Denote $\rlaptin{\alpha} = \E_{s \sim \mu} \left[ \left(\frac{\mu_0(s)}{\mu(s)}\right)^\alpha \right]$. 
Since we know that 
\begin{align}
    \rlaptin{\alpha}
    = \E_{\mu} \left[ \left(\frac{\mu_0}{\mu}\right)^\alpha \right] 
    = \E_{\mu_0} \left[ \left(\frac{\mu_0}{\mu}\right)^{\alpha-1} \right]
    = \E_{\mu_0} \left[ \left(\frac{\mu}{\mu_0}\right)^{1-\alpha} \right]
    = \rlapin{1-\alpha}
\end{align}
Thus, by setting $\alpha \leftarrow 1-\alpha$, we have 
\begin{align}
    &\E_{s, t \sim \mu} \left[ \left(\frac{\mu_0(s, t)}{\mu(s, t)}\right)^\alpha \right] \\
    &= \E_{s, t \sim \mu_0} \left[ \left(\frac{\mu(s, t)}{\mu_0(s, t)}\right)^{1-\alpha} \right] \\
    &= \rlaptin{1-\alpha} + 
    \frac{2\alpha (\alpha-1)}{\sigma_1^2} \left[ \rlaptin{1-\alpha} - 2(1-q)\rlaptin{-\alpha} + (1-q)^2 \rlaptin{-\alpha-1} \right] \\
    &= \rlaptin{\alpha} + 
    \frac{2\alpha (\alpha-1)}{\sigma_1^2} \left[ \rlaptin{\alpha} - 2(1-q)\rlaptin{\alpha+1} + (1-q)^2 \rlaptin{\alpha+2} \right]
\end{align}

So overall, the RDP of subsampled PTR is 
\begin{align}
    \frac{1}{\alpha-1} \ln &\left(
    \max \left(1 + 2q^2 \alpha (\alpha-1) 
    \left( \frac{1-\delta_0}{\sigma_1^2} + \frac{\delta_0}{\sigma_2^2} \right), \right.\right. \\
    &\rlapin{\alpha} + \frac{2\alpha (\alpha-1)}{\sigma_1^2} \left[ \rlapin{\alpha} - 2(1-q)\rlapin{\alpha-1} + (1-q)^2 \rlapin{\alpha-2} \right], \\
    &\rlaptin{\alpha} + \left.\left. \frac{2\alpha (\alpha-1)}{\sigma_1^2} \left[ \rlaptin{\alpha} - 2(1-q)\rlaptin{\alpha+1} + (1-q)^2 \rlaptin{\alpha+2} \right]
    \right)
    \right)
\end{align}
\end{proof}

\subsection{Bound of the Local sensitivity after $r$ adding/removal for Trimmed Sum}
Recall that we denote a dataset $S = \{x_1, \ldots, x_m\}$, and $x_{(k)}$ denote the $k$th smallest data point among $S$ in $\ell_2$ norm, i.e., $x_{(1)} \le x_{(2)} \le \ldots \le x_{(m)}$. 
$\tmean_\nfail(S) = \sum_{i=1}^{m-\nfail} x_{(i)}$ if $m > \nfail$, or 0 if $m \le \nfail$. 

\begin{customthm}{\ref{thm:localsenr}}[Restate]
$\localSen_{\tmean_{\nfail}}^{(r)}(S) = \norm{x_{(m-F+1+r)}}$ if $r \le \nfail-1$, or $\globalSen_{\tmean_{\nfail}}$ if $r > \nfail-1$. 
\end{customthm}
\begin{proof}
The $\globalSen_{\tmean_{\nfail}}$ for $r > \nfail-1$ is trivial as the local sensitivity can never be larger than global sensitivity. 
When $r \le \nfail-1$, it is easy to see that the local sensitivity of $\tmean_\nfail$ is just $x_{(m-F+1)}$, as we can add the element with the maximum possible norm $x_\infty$ in the data space to $S$, so that $\tmean_\nfail(S \cup \{x_\infty\}) = \sum_{i=1}^{m+1-\nfail} x_{(i)}$, and $\norm{\tmean_\nfail(S \cup \{x_\infty\}) - \tmean_\nfail(S)} = \norm{x_{(i)}}$. 
If the added element has norm smaller than $\norm{x_{(m+1-\nfail)}}$, we will always have $\norm{\tmean_\nfail(S \cup \{x_\infty\}) - \tmean_\nfail(S)} < \norm{x_{(i)}}$. We can easily see that single element removal will also not change $\tmean_\nfail(S)$ that much. Thus $\localSen_{\tmean_{\nfail}}^{(0)}(S) = x_{(m-F+1)}$. 

To maximize the local sensitivity of $S$ with $r$ elements addition/removal, it's trivial to see that the best strategy is simply adding element with the maximum possible norm $x_\infty$ in the data space to $S$, and the local sensitivity for the changed dataset $\Tilde{S} = S \cup (\{x_\infty\} \times r)$ has local sensitivity $\norm{x_{(m+r)-\nfail+1}} = \norm{x_{m-\nfail+1+r}}$ as long as $r \le \nfail-1$.  
\end{proof}

\subsection{Convergence Guarantee of PTR-based Gradient Aggregation under Byzantine Failure}

\newcommand{\mH}{\mathcal{H}^{(t)}}
\newcommand{\tilMH}{\Tilde{\mathcal{H}}^{(t)}}
\newcommand{\B}{\mathcal{B}}
\newcommand{\mL}{\mathcal{L}}

\paragraph{Settings of Robust Training.}
We denote the target loss function as $\mL(w) = \E_{z \sim \D} \left[ \ell(w, z) \right]$, where $w \in \R^d$ is the model parameters and $z$ is a data point randomly drawn from some distribution $\D$. We assume $\ell$ is $R$-Lipschitz, $\beta$-smooth and $\alpha$-strongly convex. We have $n$ stochastic gradient oracles $g_1, \ldots, g_n$, where at each iteration $t$, for every non-corrupted gradient oracles $i$, it is an unbiased estimator $g^{(t)}_i$ for the gradient of the global expected loss function with respect to the current model parameters $w_t$, i.e., $\E[ g^{(t)}_i ] = \g \mL(w^{(t)})$. 
We additionally assume that non-corrupted stochastic gradients have bounded variance, i.e., for some $\sigma > 0$ we have
\begin{align}
    \E_{g_i^{(t)}} \left[ \norm{g^{(t)}_i -  \E[g^{(t)}_i]}^2 \right] \le \sigma^2
\end{align}
at every step $t$. 

\begin{remark}
We do not consider the effect of subsampling here for clean presentation. The effect from subsampling could be easily handled by deriving a high probability upper bound for the number of corrupted gradient oracles that will be sampled. 
\end{remark}

SGD with PTR works as 
\begin{align}
    \widehat{\Delta}, \tilde{g}^{(t)} &\leftarrow \ptr(\{g_1, \ldots, g_n\}) \\
    w^{(t+1)} &\leftarrow w^{(t)} - \left(
    \eta_A I[\widehat{\Delta} < \log(1/(2\delta_0))] + 
    \eta_B I[\widehat{\Delta} \ge \log(1/(2\delta_0))]
    \right)
    \tilde{g}^{(t)}
\end{align}

Further, we call it Routine A if $\widehat \Delta$ is small and $\ptr(\{g_1, \ldots, g_n\}) = \sum_{i=1}^n g^{(t)}_{(i)} + \N (0, \sigma_1\iden_d)$, and call it Routine B if $\widehat \Delta$ is large and $\ptr(\{g_1, \ldots, g_n\}) = \sum_{i=1}^{n-\nfail} g^{(t)}_{(i)} + \N (0, \sigma_2\iden_d)$. 
It makes sense to use a smaller learning rate $\eta_A$ when the PTR goes to Routine A, and use a larger learning rate $\eta_B$ for Routine B, since the two routines use different amount of gradient information. 

\begin{customthm}{\ref{thm:convergence}}[formal version]
Let $w^* \in \argmin_w \mL(w)$. 
If there are at most $\nfail$ gradients being corrupted at each iteration, 
and if we set $\eta_A = \frac{n-F}{n} \eta_B$ and 
$
\sigma_2^2 \ge \frac{(n-\nfail)(n+1)}{n^2} \left( \frac{(n-\nfail)\sigma^2+FR}{d} + \sigma_1^2 \right)
$, then as $t \rightarrow \infty$, we have 
\begin{align}
    \E_{g, \xi} \left[ \norm{w^{(t)} - w^*}^2 \right] 
    &\le \frac{M_B}{1-\rho_B}
\end{align}
for 
\begin{align}
    0 < \eta_B \le \frac{ 2 \alpha (n-2\nfail) }{ n^2 + (n-\nfail+1)(n-\nfail)\beta^2 }
\end{align}
where 
\begin{align}
    \rho_B &= 1 - 2\eta_B \alpha (n-2\nfail) + \eta_B^2 (n^2 + (n-\nfail+1)(n-\nfail)\beta^2 ) \\
    M_B &= \eta_B^2 (n-\nfail+1) \left( (n-\nfail)\sigma^2 + d \sigma_2^2 \right) + \left( \frac{\nfail}{n} \right)^2 R^2
\end{align}

\end{customthm}

\begin{proof}
Let $\eta = \eta_A I[\widehat{\Delta} < \log(1/(2\delta_0))] + \eta_B I[\widehat{\Delta} \ge \log(1/(2\delta_0))]$. 
By the update rule of parameters in SGD, we have
\begin{align}
    \norm{w^{(t+1)} - w^*}^2 = 
    \norm{w^{(t)} - w^*}^2 - 2 \eta 
    \inner{w^{(t)} - w^*}{ \Tilde{g}^{(t)}} + \eta^2 \norm{\Tilde{g}^{(t)}}^2
\end{align}
where 
\begin{align}
    \Tilde{g}^{(t)} = \ptr(\{g_1, \ldots, g_n\})
\end{align}
If $\ptr$ runs Routine B, then 
\begin{align}
    \Tilde{g}^{(t)} = 
    \sum_{i=1}^{n-\nfail} g^{(t)}_{(i)} + \xi, \xi \sim \N(0, \sigma_2 \textbf{I}_d)
\end{align}
If $\ptr$ runs Routine A, then 
\begin{align}
    \Tilde{g}^{(t)} = 
    \sum_{i=1}^{n} g^{(t)}_{(i)} + \xi, \xi \sim \N(0, \sigma_1 \textbf{I}_d)
\end{align}

In the following proof, we consider the two routines separately.  
We denote a set of $n-\nfail$ non-corrupted gradients at step $t$ as $\mH$. 
We use $g^{(t)}_{(i)}$ the $i$th smallest gradient among the set of gradients at step $t$, i.e., $g^{(t)}_{(1)} \le \ldots \le g^{(t)}_{(n)}$. 

\textbf{Case of Running Routine B. }
In this case, $\E [\norm{\xi}^2] = d \sigma_2^2$. 
We first upper bound $\norm{\Tilde{g}^{(t)}}$:
\begin{align}
    \norm{\Tilde{g}^{(t)}} 
    &= \norm{ \sum_{i=1}^{n-\nfail} g^{(t)}_{(i)} + \xi } \\
    &\le \sum_{i=1}^{n-\nfail} \norm{g^{(t)}_{(i)}} + \norm{\xi} \\
    &\le \sum_{i \in \mH} \norm{g^{(t)}_{i}} + \norm{\xi} 
\end{align}
Therefore 
\begin{align}
    \norm{\Tilde{g}^{(t)}}^2 
    \le (n-\nfail+1) \left( \sum_{i \in \mH} \norm{g^{(t)}_{i}}^2 + \norm{\xi}^2 \right)
\end{align}
due to AM-QM inequality. 

Now we lower bound $\inner{w^t - w^*}{ \Tilde{g}^{(t)}}$. We denote the non-corrupted gradients that are not being trimmed in step $t$ as $\tilMH = \mH \cap \{g^{(t)}_{(1)}, \ldots, g^{(t)}_{(n-\nfail)}\}$, and we denote the corrupted gradients that are not being trimmed in step $t$ as $\B^{(t)} = \{g^{(t)}_{(1)}, \ldots, g^{(t)}_{(n-\nfail)}\} \setminus \tilMH$. 
Since there are at most $\nfail$ corrupted gradients, we have $|\tilMH| \ge n-2\nfail$ and $|\B| \le \nfail$. 
Since 
\begin{align}
    \Tilde{g}^{(t)} = 
    \sum_{i \in \tilMH} g^{(t)}_i + \sum_{i \in \B^t} g^{(t)}_i + \xi
\end{align}
we have 
\begin{align}
    \inner{w^{(t)} - w^*}{ \Tilde{g}^{(t)}} &= 
    \sum_{i \in \tilMH} \inner{w^{(t)} - w^*}{ g^{(t)}_i } 
    + \sum_{i \in \B^t} \inner{w^{(t)} - w^*}{ g^{(t)}_i } 
    +  \inner{w^{(t)} - w^*}{ \xi } \\
    &\ge 
    \sum_{i \in \tilMH} \inner{w^{(t)} - w^*}{ g^{(t)}_i } 
    - \nfail R \norm{w^{(t)} - w^*}
    + \inner{w^{(t)} - w^*}{ \xi }
\end{align}

Denote this lower bound as 
\begin{align}
    \phi_t = 
    \sum_{i \in \tilMH} \inner{w^{(t)} - w^*}{ g^{(t)}_i }
    - \nfail R \norm{w^{(t)} - w^*}
    + \inner{w^{(t)} - w^*}{ \xi }
\end{align}
Then 
\begin{align}
    \norm{w^{(t+1)} - w^*}^2 \le 
    \norm{w^{(t)} - w^*}^2 - 2 \eta \phi_t + \eta^2 (n-\nfail+1) \left( \sum_{i \in \mH} \norm{g^{(t)}_{i}}^2 + \norm{\xi}^2 \right)
\end{align}
Now this upper bound only include quantities that are independent from corrupted gradients. 
Take expectation of both sides over $g^{(t)}$ and $\xi$, we have 
\begin{align}
    &\E_{g, \xi} \left[ \norm{w^{(t+1)} - w^*}^2 \right] \\
    &\le \norm{w^{(t)} - w^*}^2 - 2\eta \E_{g, \xi} [\phi_t] + \eta^2 (n-\nfail+1) \E_{g, \xi} \left[ \sum_{i \in \mH} \norm{g^{(t)}_{i}}^2 + \norm{\xi}^2 \right] 
\end{align}

For $\E_{g, \xi} [\phi_t]$, we can obtain its lower bound
\begin{align}
    \E_{g, \xi} [\phi_t] 
    &= \sum_{i \in \tilMH} \inner{w^{(t)} - w^*}{ \g \mL(w^{(t)}) } 
    - \nfail R \norm{w^{(t)} - w^*} \\
    &\ge \alpha (n-2\nfail) \norm{w^{(t)} - w^*}^2 - \nfail R \norm{w^{(t)} - w^*}
\end{align}
since $\mL$ is $\alpha$-strongly convex. 

For $\E_{g, \xi} \left[ \sum_{i \in \mH} \norm{g^{(t)}_{i}}^2 + \norm{\xi}^2 \right]$, 
since $\E_g \left[ \norm{g^{(t)}_{i}}^2 \right] \le 
\sigma^2 + \norm{\g \mL(w^t)}^2$ and $\E [\norm{\xi}^2] = d \sigma_2^2$, 
we have 
\begin{align}
    &\E_{g, \xi} \left[ \sum_{i \in \mH} \norm{g^{(t)}_{i}}^2 + \norm{\xi}^2 \right] \\
    &\le (n-\nfail) \left( \sigma^2 + \norm{\g \mL(w^{(t)})}^2 \right) + d \sigma_2^2 \\
    &\le (n-\nfail) \left( \sigma^2 + \beta^2 \norm{w^{(t)} - w^*}^2 \right) + d \sigma_2^2 
\end{align}
since $\norm{\g \mL(w^{(t)})} = \norm{\g \mL(w^{(t)}) - \g \mL(w^{*})} \le \beta \norm{w^{(t)} - w^*}$ by $\beta$-smoothness of $\mL$.

Plugging in the lower and upper bounds, we have 
\begin{align}
    &\E_{g, \xi} \left[ \norm{w^{(t+1)} - w^*}^2 \right] \\
    &\le 
    \left( 1 - 2\eta \alpha (n-2\nfail) + \eta^2 \beta^2 (n-\nfail+1)(n-\nfail) \right) \norm{w^{(t)} - w^*}^2 \\
    &~~~~+ 2\eta \nfail R \norm{w^{(t)} - w^*} \\
    &~~~~+ \eta^2 (n-\nfail+1) \left( (n-\nfail)\sigma^2 + d \sigma_2^2 \right) \\
    &\le \left( 1 - 2\eta \alpha (n-2\nfail) + \eta^2 (n^2 + (n-\nfail+1)(n-\nfail)\beta^2 ) \right) \norm{w^{(t)} - w^*}^2 \\
    &+ \eta^2 (n-\nfail+1) \left( (n-\nfail)\sigma^2 + d \sigma_2^2 \right) + \left( \frac{\nfail}{n} \right)^2 R^2
\end{align}
where the last step is due to 
\begin{align}
    2\eta \nfail R \norm{w^{(t)} - w^*}
    \le \left( \frac{\nfail}{n} \right)^2 R^2 + n^2 \eta^2 \norm{w^{(t)} - w^*}^2
\end{align}

Let 
\begin{align}
    \rho_B &= 1 - 2\eta \alpha (n-2\nfail) + \eta^2 (n^2 + (n-\nfail+1)(n-\nfail)\beta^2 ) \\
    M_B &= \eta^2 (n-\nfail+1) \left( (n-\nfail)\sigma^2 + d \sigma_2^2 \right) + \left( \frac{\nfail}{n} \right)^2 R^2
\end{align}

Then we have 
\begin{align}
    \E_{g, \xi} \left[ \norm{w^{(t+1)} - w^*}^2 \right] 
    &\le \rho_B \norm{w^{(t)} - w^*}^2 + M_B
\end{align}

Therefore, as long as $\rho_B < 1$, $w^{(t)}$ will eventually 

For $\rho_B < 1$, we need 
\begin{align}
    0 < \eta \le \frac{ 2 \alpha (n-2\nfail) }{ n^2 + (n-\nfail+1)(n-\nfail)\beta^2 }
\end{align}

\textbf{Case of Running Routine A.} We follow a similar analysis as for the case of Routine A. 
In this case, $\E [\norm{\xi}^2] = d \sigma_1^2$. 

We first upper bound $\norm{\Tilde{g}^{(t)}}$:
\begin{align}
    \norm{\Tilde{g}^{(t)}} 
    &= \norm{ \sum_{i=1}^{n} g^{(t)}_{(i)} + \xi } \\
    &\le \sum_{i=1}^{n} \norm{g^{(t)}_{(i)}} + \norm{\xi}
\end{align}
Therefore 
\begin{align}
    \norm{\Tilde{g}^{(t)}}^2 
    \le (n+1) \left( \sum_{i \in \mH} \norm{g^{(t)}_{i}}^2 + \norm{\xi}^2 \right)
\end{align}

In this case, there are no gradients being corrupted, and thus we have $|\tilMH| \ge n-\nfail$. Therefore, for $\E_{g, \xi} [\phi_t]$ we have 
\begin{align}
    \E_{g, \xi} [\phi_t] 
    &= \sum_{i \in \tilMH} \inner{w^{(t)} - w^*}{ \g \mL(w^{(t)}) } 
    - \nfail R\norm{w^{(t)} - w^*} \\
    &\ge \alpha (n-\nfail) \norm{w^{(t)} - w^*}^2 - \nfail R \norm{w^{(t)} - w^*}
\end{align}

For $\E_{g, \xi} \left[ \sum_{i \in \mH} \norm{g^{(t)}_{(i)}}^2 
+ \sum_{i \in [n]\setminus \mH} \norm{g^{(t)}_{(i)}}^2
+ \norm{\xi}^2 \right]$, 
since $\E_g \left[ \norm{g^{(t)}_{(i)}}^2 \right] \le 
\sigma^2 + \norm{\g \mL(w^{(t)})}^2$ 
and $\E [\norm{\xi}^2] = d \sigma_1^2$, 
we have 
\begin{align}
    &\E_{g, \xi} \left[ \sum_{i \in \mH} \norm{g^{(t)}_{(i)}}^2 
+ \sum_{i \in [n]\setminus \mH} \norm{g^{(t)}_{(i)}}^2
+ \norm{\xi}^2  \right] \\
    &\le (n-\nfail) \left( \sigma^2 + \norm{\g \mL(w^{(t)})}^2 \right) + \nfail R + d \sigma_1^2 \\
    &\le (n-\nfail) \left( \sigma^2 + \beta^2 \norm{w^{(t)} - w^*}^2 \right) + \nfail R + d \sigma_1^2 
\end{align}

Plugging in the lower and upper bounds, we have 
\begin{align}
    &\E_{g, \xi} \left[ \norm{w^{(t+1)} - w^*}^2 \right] \\
    &\le 
    \left( 1 - 2\eta \alpha (n-\nfail) + \eta^2 (n+1)(n-\nfail)\beta^2 \right) \norm{w^{(t)} - w^*}^2 \\
    &~~~~+ 2\eta \nfail R \cdot \norm{w^{(t)} - w^*} \\
    &~~~~+ \eta^2 (n+1) \left( (n-\nfail)\sigma^2 + \nfail R + d \sigma_1^2 \right) \\
    &\le \left( 1 - 2\eta \alpha (n-\nfail) + \eta^2 (n^2 + (n+1)(n-\nfail)\beta^2 ) \right) \norm{w^{(t)} - w^*}^2 \\
    &+ \eta^2 (n+1) \left( (n-\nfail)\sigma^2 + \nfail R + d \sigma_1^2 \right) + \left( \frac{\nfail}{n} \right)^2 R^2
\end{align}

Follow similar analysis, but we will have 
\begin{align}
    \rho_A &= 1 - 2\eta \alpha (n-\nfail) + \eta^2 (n^2 + (n+1)(n-\nfail)\beta^2 ) \\
    M_A &= \eta^2 (n+1) \left( (n-\nfail)\sigma^2 + \nfail R + d \sigma_1^2 \right) + \left( \frac{\nfail}{n} \right)^2 R^2
\end{align}
So for every time we run Routine A (with full gradient sum), we have 
\begin{align}
    \E_{g, \xi} \left[ \norm{w^{(t)} - w^*}^2 \right] 
    &\le \rho_A \norm{w^{(t-1)} - w^*}^2 + M_A
\end{align}

And we have $M_B < M_A$, so more routine B can improve the utility.

Overall, we if there are at most $\nfail$ gradients being corrupted at each iteration, we have 
\begin{align}
    \E \left[ \norm{w^{(t)} - w^*}^2 \right] 
    &\le \rho_A \norm{w^{(t-1)} - w^*}^2 + M_A
\end{align}
for 
\begin{align}
    \rho_A &= 1 - 2\eta_A \alpha (n-\nfail) + \eta_A^2 (n^2 + (n+1)(n-\nfail)\beta^2 ) \\
    M_A &= \eta_A^2 (n+1) \left( (n-\nfail)\sigma^2 + \nfail R + d \sigma_1^2 \right) + \left( \frac{\nfail}{n} \right)^2 R^2
\end{align}
if PTR runs Routine A, 
and 
\begin{align}
    \E_{g, \xi} \left[ \norm{w^{(t+1)} - w^*}^2 \right] 
    &\le \rho_B \norm{w^{(t)} - w^*}^2 + M_B
\end{align}
for 
\begin{align}
    \rho_B &= 1 - 2\eta_B \alpha (n-2\nfail) + \eta_B^2 (n^2 + (n-\nfail+1)(n-\nfail)\beta^2 ) \\
    M_B &= \eta_B^2 (n-\nfail+1) \left( (n-\nfail)\sigma^2 + d \sigma_2^2 \right) + \left( \frac{\nfail}{n} \right)^2 R^2
\end{align}
if PTR runs Routine B. 

If we set $\eta_A = \frac{n-F}{n} \eta_B$, 
since 
\begin{align}
    \sigma_2^2 
    \ge \frac{(n-\nfail)(n+1)}{n^2} \left( \frac{(n-\nfail)\sigma^2+FR}{d} + \sigma_1^2 \right)
\end{align}

we have $M_A \le M_B$, and we can also easily verify that $\rho_A \le \rho_B$. 
Thus, as $t \rightarrow \infty$, we have 
\begin{align}
    \E_{g, \xi} \left[ \norm{w^{(t)} - w^*}^2 \right] 
    &\le \frac{M_B}{1-\rho_B}
\end{align}
for 
\begin{align}
    0 < \eta_B \le \frac{ 2 \alpha (n-2\nfail) }{ n^2 + (n-\nfail+1)(n-\nfail)\beta^2 }
\end{align}

\end{proof}


\subsection{Pseudo-code of TSGD+PTR}

The pseudo-code of TSGD+PTR is shown in Algorithm \ref{alg:dpsgd-with-ptr-tmean}, which uses Algorithm \ref{alg:ptr-tmean} (\texttt{PTR-TMEAN}) as a subroutine.  

\newcommand{\loss}{\mathcal{L}}
\newcommand{\z}{\textbf{z}}

\begin{algorithm}[h]
\SetAlgoLined
\SetKwInOut{Input}{input}
\SetKwInOut{Output}{output}
\Input{Dataset $\{\z^1, \ldots, \z^N\}$, loss function $\loss(\theta)=\frac{1}{N} \sum_i \loss(\theta, \z^i)$, learning rate $\eta$, batch size $B$, sensitivity bound $\tau$, Clipping threshold $R$, noise multiplier $\sigma$.}

Initialize $\theta_0$ randomly. 

\For{$t \in [T]$}{

\textbf{Random Subsampling.}

Take a random batch $\B_t$ with sampling probability $q$ in Poisson subsampling. 

\textbf{Obtain Gradients.}

For each $i \in \B_t$, get (potentially faulty) $g^{(t)}_i$. 

\textbf{Gradient Clipping. }

$g^{(t)}_i \leftarrow C \cdot g^{(t)}_i$ for $C = \min \left(1, R / \norm{g^{(t)}_i}_2 \right)$. 

\textbf{Noisy Gradient Aggregation with PTR. }

$\Tilde{g}^{(t)} \leftarrow \texttt{PTR-TMEAN} \left(\{g^{(t)}_i\} \right)$. 

\textbf{Descent.}

$\theta_{t+1}, \omega \leftarrow \theta_{t} - \eta \Tilde{g}^{(t)}$. 

\textbf{Adjust $\nfail$.}

\If{$\omega$ is `+'}{ Increase $\nfail$. }
\Else{ Decrease $\nfail$. }

}

\caption{Private Trimmed-mean SGD with Propose-Test-Release}
\label{alg:dpsgd-with-ptr-tmean}
\end{algorithm}

\begin{algorithm}[h]
\SetAlgoLined
\SetKwInOut{Input}{input}
\SetKwInOut{Output}{output}
\Input{
$S$ -- Set of (clipped) gradient vectors at step $t$: $\{g^{(t)}_i\} \subseteq \R^d$, 
}

$\Delta \leftarrow \min_{\Tilde{S} \in \{ \Tilde{S}: \localSen_{f_2}( \Tilde{S} ) > \tau \} } d \left(S, \Tilde{S} \right)$. 

$\widehat \Delta \leftarrow \Delta + \lap(0, b)$. 

\If{$\widehat \Delta \le \log(1/(2\delta_0)) b$}{
    \Return{$\mean(S) + \sigma R \cdot \N (0, \iden_d)$, `$+$'}
}\Else{
    \Return{$\tmean_\nfail(S) + \sigma \tau \cdot \N (0, \iden_d)$, `$-$'}
}

\caption{\texttt{PTR-TMEAN}}
\label{alg:ptr-tmean}
\end{algorithm}

\subsubsection{Why not directly apply PTR to regular SGD?}

As we discussed in the main text, PTR typically works with robust statistics such as trimmed mean. Regular SGD use mean as gradient aggregation function. Mean, however, does not have a low local sensitivity on most of the inputs. 
Therefore, we focus on the application of PTR in privatizing robust statistics.

\newpage

\section{Experiment Settings \& Additional Results}
\label{appendix:experiment}

\subsection{Experiment Settings for Table \ref{tb:full}}

\subsubsection{Corruption Simulation.}

\rebuttal{
Following the literature in Byzantine robustness \cite{yin2018byzantine,xie2019zeno,acharya2021robust,gupta2021byzantine}, we consider three possible sources of Byzantine failures: corruption in features, labels and communicated gradients. All experiments are repeated for 0\% (i.e., clean), 10\%, and 20\% corruption ratio (CR). 

\textbf{Feature Corruption.} Corruption in Features can arise from the process of data collection. 
Following \cite{acharya2021robust}, we adopt the additive corruption introduced in \cite{hendrycks2018benchmarking}. Specifically, we add Gaussian noise from $\N(0, 100)$ directly to the corrupted images. 

\textbf{Gradient Corruption.} 
Gradient can be corrupted in distributed SGD, e.g., due to hardware malfunction or malicious users. 
We consider the gradient corruption following \cite{xie2019zeno, acharya2021robust}, where we add Gaussian noise from $\N(0, 100)$ to the true gradients. 


\textbf{Label Corruption.} 
Noisy labels are pervasive in the dataset. We randomly flip of label of certain amount of data points.
}

\subsubsection{Datasets \& Models. }
MNIST~\cite{lecun1998mnist} is one of the most commonly used benchmark datasets in deep learning containing 70000 handwritten digit images. CIFAR-10~\cite{cifar} is another classic benchmark for image classification. It consists of 60000 images from 10 different classes with 6000 images each. EMNIST \cite{cohen2017emnist} is similar to MNIST but has a much larger size (145,600 character images and 26 balanced classes). 

In Table \ref{tb:full}, all models are trained entirely from scratch. 
For all datasets, we use a small CNN whose architecture is inherited from the official tutorial of tensorflow/privacy\footnote{https://github.com/tensorflow/privacy}. 

\newcommand{\tsgdptr}{\texttt{TSGD+PTR}}
\newcommand{\tsgdgau}{\texttt{TSGD+Gaussian}}

\subsubsection{Hyperparameters. }
For $\tsgdptr$, we set $\delta_0 = 10^{-8}, b=1$. 
For MNIST and EMNIST, we set gradient clipping bound $R=1$, $\tau=0.5$, noise multiplier $\sigma=1.1$ for $\tsgdptr$, and $\sigma=0.7$ for $\tsgdgau$. The noise multiplier for $\tsgdptr$ and $\tsgdgau$ are picked differently in order to align their privacy loss in each iteration. 
For CIFAR10, we set gradient clipping bound $R=3$, $\tau=2$, noise multiplier $\sigma=1.1$ for $\tsgdptr$, and $\sigma=0.9$ for $\tsgdgau$. 

For MNIST and EMNIST dataset, we set the learning rate as 0.15, batch size as 256; for CIFAR10 dataset, we set the learning rate as 0.1, batch size as 1024. 

We set $\nfail$ to be 25\% of the batch size for $\tsgdgau$. For $\tsgdptr$, $\nfail$ is dynamically adjusted based on the value of $\widehat \Delta$. If sensitivity test is passed, we increase $\nfail$ by $0.02 \times \mathrm{batch size}$, and if sensitivity test is failed, we decrease $\nfail$ by the same amount. 

All of our experiments are performed on Tesla P100-PCIE-16GB GPU.

\subsection{Additional Results on More Architectures}

We experiment with more architectures on CIFAR10 dataset. 
Specifically, we use two famous, moderately large architecture ResNet18 \cite{he2016deep} and VGG11 \cite{simonyan2014very}. We follow the common procedure in prior works \cite{DBLP:conf/ccs/AbadiCGMMT016}: we use ResNet18 and VGG11 that are pretrained by ImageNet dataset. The pre-training weight is publicly available from PyTorch. We only finetune the last layer of the model. 

We set gradient clipping bound $R=5$, batch size as 2048, learning rate 0.01. For $\tsgdptr$, we set $\delta_0 = 10^{-8}, b=1$, and $\tau=3$. We set noise multiplier $\sigma=2.2$ for $\tsgdptr$, and $\sigma=1.8$ for $\tsgdgau$. 
We set $\nfail$ to be 25\% of the batch size for $\tsgdgau$. $\nfail$ is dynamically adjusted in the same way as the experiments in the main text. 

The results are shown in Table \ref{tb:full-appendix}. 
As we can see, $\tsgdptr$ consistently outperforms $\tsgdgau$ across different architectures. 




\begin{table}[t]
\centering
\resizebox{\columnwidth}{!}{
\begin{tabular}{@{}ccccccc@{}}
\toprule
\multirow{2}{*}{\textbf{Archi.}}   & \multirow{2}{*}{\textbf{\begin{tabular}[c]{@{}c@{}}Corruption\\ Type\end{tabular}}} & \multirow{2}{*}{\textbf{CR}} & \multicolumn{2}{c}{$\eps = 3.0$}                               & \multicolumn{2}{c}{$\eps=5.0$}                                \\
                                   &                                                                                     &                              & \textbf{TSGD+Gaussian} & \textbf{TSGD+PTR}                     & \textbf{TSGD+Gaussian} & \textbf{TSGD+PTR}                    \\ \midrule
\textbf{}                          & \textbf{}                                                                           & \textbf{0}                   & 50.05\%                & 52.09\% $\textcolor{red}{(+2.04\%)}$  & 51.75\%                & 52.85\% $\textcolor{red}{(+1.1\%)}$  \\
\multirow{6}{*}{\textbf{VGG11}}    & \multirow{2}{*}{\textbf{Label}}                                                     & \textbf{0.1}                 & 44.03\%                & 48.73\% $\textcolor{red}{(+4.7\%)}$   & 48.78\%                & 50.24\% $\textcolor{red}{(+1.46\%)}$ \\
                                   &                                                                                     & \textbf{0.2}                 & 35.04\%                & 43.63\% $\textcolor{red}{(+8.59\%)}$  & 43.17\%                & 46.35\% $\textcolor{red}{(+3.18\%)}$ \\
                                   & \multirow{2}{*}{\textbf{Feature}}                                                   & \textbf{0.1}                 & 45.59\%                & 49.57\% $\textcolor{red}{(+3.98\%)}$  & 49.60\%                & 50.74\% $\textcolor{red}{(+1.14\%)}$ \\
                                   &                                                                                     & \textbf{0.2}                 & 43.82\%                & 47.95\% $\textcolor{red}{(+4.13\%)}$  & 48.08\%                & 48.88\% $\textcolor{red}{(+0.8\%)}$  \\
                                   & \multirow{2}{*}{\textbf{Gradient}}                                                  & \textbf{0.1}                 & 45.61\%                & 50.15\% $\textcolor{red}{(+4.54\%)}$  & 50.10\%                & 51.15\% $\textcolor{red}{(+1.05\%)}$ \\
                                   &                                                                                     & \textbf{0.2}                 & 45.40\%                & 50.15\% $\textcolor{red}{(+4.75\%)}$  & 50.46\%                & 50.82\% $\textcolor{red}{(+0.36\%)}$ \\ \midrule
\multirow{2}{*}{}                  & \multirow{2}{*}{}                                                                   & \multirow{2}{*}{}            & \multicolumn{2}{c}{$\eps = 3.0$}                               & \multicolumn{2}{c}{$\eps=5.0$}                                \\
                                   &                                                                                     &                              & \textbf{TSGD+Gaussian} & \textbf{TSGD+PTR}                     & \textbf{TSGD+Gaussian} & \textbf{TSGD+PTR}                    \\  \midrule
                                   &                                                                                     & \textbf{0}                   & 38.97\%                & 43.15\% $\textcolor{red}{(+3.18\%)}$  & 46.27\%                & 47.15\% $\textcolor{red}{(+0.88\%)}$ \\
\multirow{6}{*}{\textbf{ResNet18}} & \multirow{2}{*}{\textbf{Label}}                                                     & \textbf{0.1}                 & 35.84\%                & 42.55\% $\textcolor{red}{(+6.71\%)}$  & 42.05\%                & 44.77\% $\textcolor{red}{(+2.72\%)}$ \\
                                   &                                                                                     & \textbf{0.2}                 & 26.15\%                & 36.48\% $\textcolor{red}{(+10.33\%)}$ & 33.97\%                & 39.71\% $\textcolor{red}{(+5.74\%)}$ \\
                                   & \multirow{2}{*}{\textbf{Feature}}                                                   & \textbf{0.1}                 & 37.91\%                & 43.15\% $\textcolor{red}{(+5.24\%)}$  & 42.54\%                & 44.93\% $\textcolor{red}{(+2.39\%)}$ \\
                                   &                                                                                     & \textbf{0.2}                 & 36.30\%                & 41.85\% $\textcolor{red}{(+5.55\%)}$  & 41.21\%                & 43.84\% $\textcolor{red}{(+2.63\%)}$ \\
                                   & \multirow{2}{*}{\textbf{Gradient}}                                                  & \textbf{0.1}                 & 38.42\%                & 44.27\% $\textcolor{red}{(+5.85\%)}$  & 43.83\%                & 46.4\% $\textcolor{red}{(+2.57\%)}$  \\
                                   &                                                                                     & \textbf{0.2}                 & 37.79\%                & 44.34\% $\textcolor{red}{(+6.55\%)}$  & 43.22\%                & 46.14\% $\textcolor{red}{(+2.92\%)}$ \\ 
                                   \bottomrule
\end{tabular}
}
\caption{
Model Accuracy under different privacy budgets and corruption settings. Every statistic is averaged over 5 runs with different random seed. The improvement of $\tsgd+\ptr$ over $\tsgd+\gau$ is highlighted in the red text. 
}
\label{tb:full-appendix}
\end{table}

\subsection{Additional Results on More Corruption Types}

\rebuttal{
Besides the three corruption types we considered in the maintext, we evaluate on two additional possible errors which are considered more severe kinds of failure. 
\begin{enumerate}
    \item \textbf{Gradient Bit-flipping failure} where the bits that control the sign of the floating numbers are flipped, e.g., due to some hardware failure. A faulty worker pushes the negative gradient instead of the true gradient to the servers. \item \textbf{Targeted label flipping failure} where the labels are flipped in a ``targeted'' way, i.e., for any \emph{label} $\in \{0, \ldots, 25\}$, is replaced by $25-$\emph{label}. Such failures/attacks can be caused by data poisoning or software failures. 
\end{enumerate}
We experiment on EMNIST dataset and the results are shown in Table \ref{tb:more-attack}. 
The experiment settings are exactly the same as the settings for Table \ref{tb:full}. As we can see, $\tsgdptr$ once again outperform $\tsgdgau$ significantly. 
}

\begin{table}[t]
\centering
\resizebox{\columnwidth}{!}{
\begin{tabular}{@{}cccccc@{}}
\toprule
\textbf{Corruption}               & \multirow{2}{*}{\textbf{CR}} & \multicolumn{2}{c}{$\eps = 2.0$}                              & \multicolumn{2}{c}{$\eps=2.5$}                                \\ 
\textbf{Type}                     &                              & \textbf{TSGD+Gaussian} & \textbf{TSGD+PTR}                    & \textbf{TSGD+Gaussian} & \textbf{TSGD+PTR}                    \\ \midrule
\textbf{}                         & \textbf{0}                   & 72.94\%                & 76.44\% $\textcolor{red}{(+3.5\%)}$  & 79.15\%                & 81.02\% $\textcolor{red}{(+1.87\%)}$ \\
\multirow{2}{*}{\textbf{Targeted Label Flip}}   & \textbf{0.1}                 & 72.06\%                & 75.11\% $\textcolor{red}{(+3.05\%)}$ & 76.25\%                & 78.97\% $\textcolor{red}{(+2.72\%)}$ \\
                                  & \textbf{0.2}                 & 70.72\%                & 73.85\% $\textcolor{red}{(+3.13\%)}$ & 73.69\%                & 76.67\% $\textcolor{red}{(+2.98\%)}$ \\
\multirow{2}{*}{\textbf{Gradient Bit Flip}} & \textbf{0.1}                 & 69.58\%                & 75.67\% $\textcolor{red}{(+6.09\%)}$  & 75.39\%                & 78.82\% $\textcolor{red}{(+3.43\%)}$ \\
                                  & \textbf{0.2}                 & 65.13\%                & 75.67\% $\textcolor{red}{(+10.54\%)}$ & 71.85\%                & 78.85\% $\textcolor{red}{(+7.0\%)}$  \\ \bottomrule
\end{tabular}
}
\caption{
Model Accuracy on EMNIST dataset under different privacy budgets on two more severe types of failures. 
}
\label{tb:more-attack}
\end{table}

\subsection{Comparison between regular DPSGD and trimmed-mean based robust SGD}

\rebuttal{
We additionally show the comparison between trimmed mean robust SGD with/without PTR and regular DPSGD in Table \ref{tb:compare-with-dpsgd} on EMNIST dataset with the same experiment settings described before. 
As we can see, the robust SGD performs worse than non-robust counterpart on clean training data. This is because when the training data are clean, the outliers filtered out by robust SGD in the gradient batch are usually corresponding to the data points that are misclassified, which are important for improving model performance. However, $\tsgdptr$ achieves better performance on most of the corruption settings. 
}

\begin{table}[t]
\centering
\resizebox{\columnwidth}{!}{
\begin{tabular}{@{}cccccccc@{}}
\toprule
\textbf{Corruption}                & \multirow{2}{*}{\textbf{CR}} & \multicolumn{3}{c}{$\eps = 2.0$}                                                & \multicolumn{3}{c}{$\eps=2.5$}                                                  \\
\textbf{Type}                      &                              & \textbf{TSGD+Gaussian} & \textbf{DPSGD} & \textbf{TSGD+PTR}                     & \textbf{TSGD+Gaussian} & \textbf{DPSGD} & \textbf{TSGD+PTR}                     \\ \midrule
\textbf{}                          & \textbf{0}                   & 72.94\%                & 77.29\%        & 76.44\% $\textcolor{blue}{(-0.85\%)}$ & 79.15\%                & 83.06\%        & 81.02\% $\textcolor{blue}{(-2.04\%)}$ \\
\multirow{2}{*}{\textbf{Label}}    & \textbf{0.1}                 & 72.60\%                & 74.80\%        & 75.66\% $\textcolor{red}{(+0.86\%)}$  & 79.38\%                & 79.30\%        & 80.63\% $\textcolor{red}{(+1.33\%)}$  \\
                                   & \textbf{0.2}                 & 70.03\%                & 71.42\%        & 72.62\% $\textcolor{red}{(+1.2\%)}$   & 77.48\%                & 77.43\%        & 79.19\% $\textcolor{red}{(+1.76\%)}$  \\
\multirow{2}{*}{\textbf{Feature}}  & \textbf{0.1}                 & 69.60\%                & 74.58\%        & 74.01\% $\textcolor{blue}{(-0.57\%)}$  & 77.80\%                & 80.95\%        & 81.04\% $\textcolor{red}{(+0.09\%)}$  \\
                                   & \textbf{0.2}                 & 70.04\%                & 73.79\%        & 74.76\% $\textcolor{red}{(+0.97\%)}$  & 78.99\%                & 79.43\%        & 80.74\% $\textcolor{red}{(+1.31\%)}$  \\
\multirow{2}{*}{\textbf{Gradient}} & \textbf{0.1}                 & 71.75\%                & 72.97\%        & 76.19\% $\textcolor{red}{(+3.22\%)}$  & 77.32\%                & 76.59\%        & 77.73\% $\textcolor{red}{(+1.14\%)}$  \\
                                   & \textbf{0.2}                 & 70.76\%                & 71.22\%        & 74.65\% $\textcolor{red}{(+3.43\%)}$  & 76.68\%                & 75.69\%        & 77.17\% $\textcolor{red}{(+1.48\%)}$ \\ \bottomrule
\end{tabular}
}
\caption{
\rebuttal{
Model accuracy comparison with regular DPSGD on EMNIST dataset. 
The improvement of $\tsgd+\ptr$ over regular DPSGD is highlighted in the red text. }
}
\label{tb:compare-with-dpsgd}
\end{table}

\subsection{Additional Results on Privacy Analysis Comparison}
\rebuttal{
In this section, we show more numerical results on privacy analysis comparison by varying $\tau = \sigma_2 / \sigma_1$. 
In Figure \ref{fig:rdp-vs-dp-value-more}, we numerically compute the privacy bound from direct analysis and the one converted from RDP, with $\tau \in [2/3, 1/3, 0.1]$. 
In Figure \ref{fig:rdp-compare-more}, we show the subsampled privacy bound composed with moment account, also with $\tau \in [2/3, 1/3, 0.1]$. 
As we can see, our Theorem \ref{thm:rdp-for-ptr-maintext} and Theorem \ref{thm:subsampled-ptr} once again provide tighter bounds compared with the baseline. 
}

\begin{figure}[t]
    \centering
    \includegraphics[width=\columnwidth]{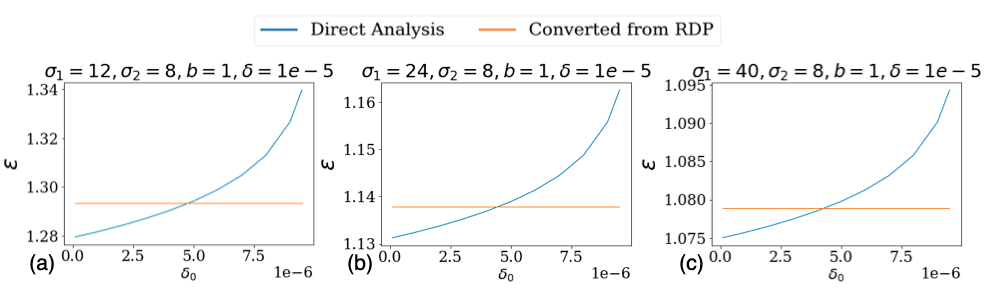}
    \caption{
    The $\eps$ parameter of the $(\eps, \delta)$-DP guarantee of PTR when $\delta = 10^{-5}$ for different noise scales. We convert the RDP bound in Theorem \ref{thm:rdp-for-ptr-maintext} to $(\eps, \delta)$-DP by the RDP-DP conversion formula from \cite{balle2020hypothesis}, and compare it with the $\eps$ obtained from the direct analysis in Theorem \ref{thm:dp-for-ptr-maintext}. For the bound converted from RDP, we search for the optimal $\alpha \in [1, 200]$. The bound is constant across different $\delta_0$ since when $\delta_0$ is small, the RDP for PTR will take the second term in (\ref{eq:ptr-rdp-bound}). 
    }
    \label{fig:rdp-vs-dp-value-more}
\end{figure}

\begin{figure}[t]
    \centering
    \includegraphics[width=\columnwidth]{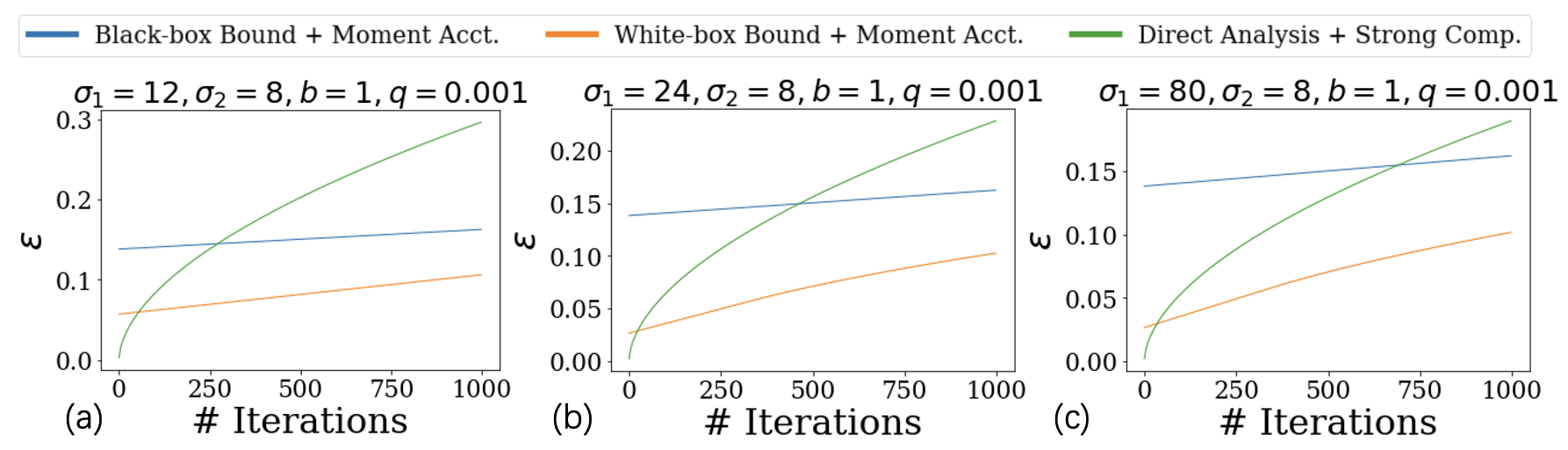}
    \caption{
    Illustration of the use of our Theorem \ref{thm:subsampled-ptr} in moments accountant. We plot the the privacy loss $\eps$ for $\delta=10^{-5}$ after different rounds of composition. We set $\delta_0 = 10^{-8}$ here to allow more iterations for Strong Composition of $(\eps, \delta)$-DP. 
    }
    \label{fig:rdp-compare-more}
\end{figure}

\end{document}